\newtheorem{theorem}{Theorem}[section]
\newtheorem{proposition}[theorem]{Proposition}
\theoremstyle{definition}
\newtheorem{algorithm}[theorem]{Algorithm}
\newtheorem{definition}[theorem]{Definition}
\newtheorem{example}[theorem]{Example}
\newtheorem{remark}[theorem]{Remark}
\begin{document}

\title{$k$-Mixing Properties of Multidimensional Cellular Automata}

\keywords{Strongly mixing, $k$-mixing, multidimensional cellular automaton, permutive, surjection}
\subjclass{Primary 28D20; Secondary 37B10, 37A05}

\author{Chih-Hung Chang}
\address[Chih-Hung Chang]{Department of Applied Mathematics, National University of Kaohsiung, Kaohsiung 81148, Taiwan, ROC.}
\email{chchang@nuk.edu.tw}

\thanks{E-Mail: chchang@nuk.edu.tw. This work is partially supported by the Ministry of Science and Technology, ROC (Contract No MOST 104-2115-M-390-004-).}
\date{August 1, 2015}

\baselineskip=1.5\baselineskip

\begin{abstract}
This paper investigates the $k$-mixing property of a multidimensional cellular automaton. Suppose $F$ is a cellular automaton with the local rule $f$ defined on a $d$-dimensional convex hull $\mathcal{C}$ which is generated by an apex set $C$. Then $F$ is $k$-mixing with respect to the uniform Bernoulli measure for all positive integer $k$ if $f$ is a permutation at some apex in $C$. An algorithm called the \emph{Mixing Algorithm} is proposed to verify if a local rule $f$ is permutive at some apex in $C$. Moreover, the proposed conditions are optimal. An application of this investigation is to construct a multidimensional ergodic linear cellular automaton.
\end{abstract}

\maketitle

\section{Introduction}

Cellular automaton (CA) is a particular class of dynamical systems introduced by S. Ulam \cite{Ula-PICoM1952} and J. von Neumann \cite{vNeu-1966} as a model for self-production. CAs have been systematically studied by Hedlund from the viewpoint of symbolic dynamics \cite{Hed-MST1969}. Investigation of CAs from the point of view of the ergodic theory has received remarkable attention in the past few decades since CAs are widely applied in many disciplines such as biology, physics, computer science, and so on \cite{BKM-PD1997, CFMM-TCS2000, CFM+-TCS1999, FD-FI2008, Kle-PAMS1997, MM-BR2003, PY-ETDS2002}. Many dynamical behaviors of CAs are undecidable and the classification of dynamical behaviors is one of the central open questions in this field \cite{DFV-2003, KO-2008, LM-TCS2010, Lukkarila-JCA2010}.

Invertibility is one of the fundamental microscopic physical laws of nature. Bennett demonstrated that invertible Turing machines are computationally universal \cite{Bennett-IJRD1973}. The university remains true for one-dimensional cellular automata, even in the sense that any irreversible cellular automaton can be simulated by a reversible one on finite configurations \cite{Morit-JIPSJ1994,Morit-IPL1992,MH-IT1989,Morit-2012}. Amoroso and Patt showed that invertibility is decidable in one dimension \cite{AP-JCSS1972}, and Kari proved that it is undecidable in two and higher dimensions \cite{Kari-JCSS1994,Kari-TCS2005}. Ito \emph{et al.}\cite{ION-JCSS1983} present criteria for surjectivity and injectivity of the global transition map of one-dimensional linear CAs; they also mention that
\begin{quote}
\emph{criteria are desired for determining when the sequence of transitions of a state configuration of a cellular automata takes a certain type of dynamical behavior.}
\end{quote}
In this paper we propose a criterion for the dynamical behavior of multidimensional CAs in the framework of ergodic theory.

Shirvani and Rogers \cite{SR-CMP1991} show that a surjective CA with two symbols is invariant and strongly mixing with respect to uniform Bernoulli measure. Shereshevsky has studied some strong ergodic properties of the natural extension of a measure theoretic endomorphism such as $k$-mixing, and the number of symbols could be any positive integer \cite{She-MM1992}. One-dimensional surjective CAs admitting an equicontinuity point have a dense set of periodic orbits \cite{BT-AIHPPS2000}, and surjection and non-wandering are equivalent notions for multidimensional CAs \cite{ADF-IPL2013}.

Kleveland demonstrates that, for one-dimensional case, leftmost and rightmost permutive CAs are strongly mixing with respect to product measure defined by normalized Haar measure, and some bipermutive CAs are even $k$-mixing with respect to product measure \cite{Kle-PAMS1997}. Notably, leftmost and rightmost permutive CAs are both surjective \cite{Hed-MST1969}. Cattaneo \emph{et al}. propose an algorithm to construct ergodic $d$-dimensional linear CAs \cite{CFM+-TCS1999}. Some ergodic properties, such as ergodicity, strongly mixing, and Bernoulli automorphism, of one-dimensional CAs are revealed in \cite{Akin-GM2010, CA-2014}.

This investigation devotes to studying the surjection and $k$-mixing property of multidimensional cellular automata over a finite alphabet $\mathcal{A}$ with respect to the uniform Bernoulli measure $\mu$. Suppose $C \subset \mathbb{Z}^d$ is a finite subset in $d$-dimensional lattice. Let $\mathcal{C}$ be the convex hull in $\mathbb{Z}^d$ which is generated by $C$, and let $f$ be a map from $\mathcal{A}^{\mathcal{C}}$ to $\mathcal{A}$. Suppose $F$ is a CA with the local rule $f$ and $\mathcal{C}$ is a multidimensional hypercuboid. Proposition \ref{prop:corner-permutive-onto} indicates that $F$ is surjective if its local rule $f$ is \emph{corner permutive}, i.e., $f$ is a permutation at some vertex in $C$. This extends Hedlund's result \cite{Hed-MST1969} to multidimensional case.

Furthermore, Theorem \ref{thm:rectangle-k-mixing} demonstrates that $F$ is $k$-mixing with respect to the uniform Bernoulli measure for all $k \geq 1$ if $f$ defined on a hypercuboid is corner permutive. Note that, for the case $k = 1$, $F$ is known as strongly mixing. Theorem \ref{thm:polygon-k-mixing} extends Theorem \ref{thm:rectangle-k-mixing} to more general case.

Suppose $\mathcal{C}$ is a convex hull generated by an apex set $C$ and $\mathcal{C}$ is not a hypercuboid. Theorem \ref{thm:polygon-k-mixing} addresses an algorithm named \emph{Mixing Algorithm} to verify if $F$ is $k$-mixing with respect to the uniform Bernoulli measure. Roughly speaking, $F$ is $k$-mixing for $k \geq 1$ if the local rule $f$ is permutive at some apex in $C$. It is remarkable that Theorems \ref{thm:rectangle-k-mixing} and \ref{thm:polygon-k-mixing} can be extended to any Markov measure $\nu$ as long as it is $F$-invariant.

In \cite{Willson-MST1975}, Wilson demonstrate that a two-dimensional linear CA over $\mathcal{A} = \{0, 1\}$ is mixing if it local rule is permutive in some extremal coordinate $x$, where $x$ is called \emph{extremal} if $\langle x, x\rangle <\langle x, y\rangle$ for all $y \in \mathcal{C}$. Wilson's proof technique can easily generalize to any extremally permutive CAs on any alphabet, and any $d$-dimensional linear CA for $d \geq 2$. In \cite{Lee-2009}, Lee reveals Wilson's result holds for two-dimensional CA if its local rule is permutive at the corner. Theorem \ref{thm:polygon-k-mixing} generalizes Wilson's and Lee's result to more general case. For instance, Examples \ref{eg:2d-mixing-polygon-case2} and \ref{eg:3d-mixing-polygon} are both $k$-mixing for all $k \in \mathbb{N}$, and neither of them is extremally permutive.

It is worth emphasizing that the conditions proposed in Theorems \ref{thm:rectangle-k-mixing} and \ref{thm:polygon-k-mixing} are optimized already. Example \ref{eg:not-corner-permutive-not-mixing} provides an two-dimensional instance which illustrates a CA with non-corner-permutive local rule being not even ergodic. An application of the present investigation is the construction of multidimensional ergodic linear CAs, which is different from Cattaneo \emph{et al}.\cite{CFM+-TCS1999} and is elucidated in the further work.

The rest of the paper is organized as follows. Section 2 establishes some basic definitions and formulations of problem to state the main theorems. The condition that determines whether or not a multidimensional CA is surjective is addressed therein. Sections 3 and 4 deliberate the $k$-mixing property of a multidimensional CA with the local rule defined on a hypercuboid and a convex hull, respectively. An example infers that the conditions proposed in Theorems \ref{thm:rectangle-k-mixing} and \ref{thm:polygon-k-mixing} are optimal and some discussion are stated in Section 5.

\section{Preliminary}

Let $\mathcal{A}=\{0,1,\cdots,m-1\}$ be a finite alphabet for some positive integer $m>1$ and let $\mathbf{X}=\mathcal{A}^{\mathbb{Z}^d}$ be the $d$-dimensional lattice over $\mathcal{A}$. Namely,
$$
\mathbf{X} = \{x=(x_{\mathbf{i}})_{\mathbf{i} \in {\mathbb{Z}^d}}: x_{\mathbf{i}} \in \mathcal{A}\}.
$$
A $d$-dimensional cellular automaton (CA) $F$ is defined as follows. Suppose $D$ is a finite subset of $\mathbb{Z}^d$ and $f: \mathcal{A}^D \to \mathcal{A}$ is given as a local map, where
$$
\mathcal{A}^D = \{x=(x_{\mathbf{i}})_{\mathbf{i} \in D}: x_{\mathbf{i}} \in \mathcal{A}\}.
$$
The map $F: \mathbf{X} \to \mathbf{X}$ defined by $F(x)_{\mathbf{i}} = f(x_{\mathbf{i} + D})$, herein $\mathbf{i} + D = \{\mathbf{i} + \mathbf{j}: \mathbf{j} \in D\}$ and $x_{\mathbf{i} + D} = (x'_{\mathbf{k}})_{\mathbf{k} \in D}$ with $x'_{\mathbf{k}} = x_{\mathbf{i} + \mathbf{k}}$, is called the CA  with the local rule $f$. A CA $F$ with the local rule $f$ is called linear if $f$ is linear, i.e., $f(x) = \Sigma_{\mathbf{i} \in D} a_{\mathbf{i}} x_{\mathbf{i}} \pmod m$.

For instance, suppose $t, b, l, r \in \mathbb{Z}$ satisfy $b \leq t, l \leq r$. Let $C = \{(l, b), (l, t), (r, b), (r, t)\}$ be a finite subset in $\mathbb{Z}^2$, and let $\mathcal{C} \subset \mathbb{Z}^2$ be the polygon generated by $C$. In other words,
$$
\mathcal{C} = \textrm{poly}(C) = \{(i, j) \in \mathbb{Z}^2: l \leq i \leq r, b \leq j \leq t\}
$$
is a rectangle. Define $f: \mathcal{A}^{\mathcal{C}} \to \mathcal{A}$ by
$$
f \left(
    \begin{array}{ccc}
    x_{l, t} &\cdots & x_{r, t} \\
    \vdots &\ddots &\vdots  \\
    x_{l, b} &\cdots & x_{r, b} \\
    \end{array} \right)
= \sum_{l \leq i \leq r, b \leq j \leq t} a_{i, j} x_{i, j} \pmod m,
$$
where $a_{i, j} \in \mathbb{Z}$ for $l \leq i \leq r, b \leq j \leq t$. Then the CA $F$ with the local rule $f$ is given by
$$
F(x)_{i, j} = f \left(
    \begin{array}{ccc}
    x_{i+l, j+t} &\cdots & x_{i+r, j+t} \\
    \vdots &\ddots &\vdots  \\
    x_{i+l, j+b} &\cdots & x_{i+r, j+b} \\
    \end{array} \right)
$$
and is a linear CA.

The study of the local rule of a CA is essential for the understanding of this system. In \cite{Hed-MST1969}, Hedlund introduced a terminology \emph{permutive} for one-dimensional CAs. The following definition extends Hedlund's definition to multidimensional case.

\begin{definition}
A local rule $f: \mathcal{A}^D \to \mathcal{A}$ is called permutive in the variable $x_{\mathbf{i}}$, $\mathbf{i} \in D$, if $f$ is a permutation at $x_{\mathbf{i}}$. More precisely, for each set
$\{c_{\mathbf{j}}: \mathbf{j} \in D \setminus \{\mathbf{i}\}, c_{\mathbf{j}} \in \mathcal{A}\}$, the map $g: \mathcal{A} \to \mathcal{A}$ defined by $g(a) = f(x^a)$ one-to-one and onto, where
$$
x^a_{\mathbf{j}} = \left\{
\begin{array}{ll}
a, & \mathbf{j} = \mathbf{i}; \\
c_{\mathbf{j}}, & \hbox{otherwise.}
\end{array}\right.
$$
\end{definition}

A straightforward verification demonstrates the following proposition.

\begin{proposition}
Suppose $f: \mathcal{A}^D \to \mathcal{A}$ is linear. Then $f(x) = \Sigma_{\mathbf{i} \in D} a_{\mathbf{i}} x_{\mathbf{i}}$ is permutive in the variable $x_{\mathbf{j}}$ if and only if $a_{\mathbf{j}}$ is relative prime to $m$.
\end{proposition}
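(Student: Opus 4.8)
The plan is to identify $\mathcal{A} = \{0, 1, \dots, m-1\}$ with the ring $\mathbb{Z}/m\mathbb{Z}$ and to reduce permutivity in $x_{\mathbf{j}}$ to the elementary fact that multiplication by an integer $c$ induces a bijection of $\mathbb{Z}/m\mathbb{Z}$ if and only if $\gcd(c, m) = 1$.

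First I would fix the distinguished variable $x_{\mathbf{j}}$ together with an arbitrary admissible pattern of constants $\{c_{\mathbf{i}} : \mathbf{i} \in D \setminus \{\mathbf{j}\}\}$, and observe that the map $g \colon \mathcal{A} \to \mathcal{A}$ appearing in the definition of permutivity takes the affine form
$$
g(a) = f(x^a) = a_{\mathbf{j}}\, a + b \pmod m, \qquad \text{where } b = \sum_{\mathbf{i} \in D \setminus \{\mathbf{j}\}} a_{\mathbf{i}} c_{\mathbf{i}} \pmod m
$$
is a constant not depending on $a$. Since the translation $a \mapsto a + b$ is always a bijection of $\mathbb{Z}/m\mathbb{Z}$, the map $g$ is a bijection if and only if $a \mapsto a_{\mathbf{j}}\, a \pmod m$ is a bijection.

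For the ($\Leftarrow$) direction, if $\gcd(a_{\mathbf{j}}, m) = 1$ then there is an integer $a_{\mathbf{j}}'$ with $a_{\mathbf{j}} a_{\mathbf{j}}' \equiv 1 \pmod m$, so multiplication by $a_{\mathbf{j}}'$ is a two-sided inverse of multiplication by $a_{\mathbf{j}}$; hence $g$ is a bijection for every admissible pattern of constants, and $f$ is permutive in $x_{\mathbf{j}}$. For the ($\Rightarrow$) direction, I would specialize to the pattern $c_{\mathbf{i}} = 0$ for all $\mathbf{i} \neq \mathbf{j}$, so that $b = 0$ and $g(a) = a_{\mathbf{j}}\, a \pmod m$; permutivity forces this $g$ to be injective, whereas if $d = \gcd(a_{\mathbf{j}}, m) > 1$ then $a_{\mathbf{j}} \cdot (m/d) \equiv 0 \equiv a_{\mathbf{j}} \cdot 0 \pmod m$ with $m/d \not\equiv 0 \pmod m$, contradicting injectivity; thus $\gcd(a_{\mathbf{j}}, m) = 1$.

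There is no genuine obstacle here, which is why the paper calls it a straightforward verification; the only point to keep track of is that permutivity is quantified over all admissible constant patterns, so the direction ($\Rightarrow$) needs only one well-chosen pattern while ($\Leftarrow$) must cover all of them simultaneously — and both become immediate once the affine form of $g$ is isolated.
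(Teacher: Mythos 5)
Your proof is correct and is exactly the ``straightforward verification'' the paper alludes to without writing out: reduce permutivity in $x_{\mathbf{j}}$ to the bijectivity of $a \mapsto a_{\mathbf{j}} a \pmod m$ on $\mathbb{Z}/m\mathbb{Z}$, which holds precisely when $\gcd(a_{\mathbf{j}}, m) = 1$. Nothing further is needed.
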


Suppose $F$ is a one-dimensional CA with the local rule $f = f(x_i, \ldots, x_j)$, where $i \leq j$. $f$ is called leftmost permutive (respectively rightmost permutive) if $f$ is permutive in the variable $x_i$ (respectively $x_j$). Then $F$ is surjective if $f$ is either leftmost permutive or rightmost permutive (\cite{Hed-MST1969}). A careful and routine examination extends Hedlund's result to multidimensional CAs.

For $1 \leq i \leq d$, suppose $k_i, K_i \in \mathbb{Z}$ are given so that $k_i \leq K_i$. Let $C$ be the subset of $\mathbb{Z}^d$ such that every coordinate of element in $C$ is either $k_i$ or $K_i$, and let $\mathcal{C}$ be the convex hull generated by $C$. It is seen that $\mathcal{C}$ is a $d$-dimensional hypercuboid. A local map $f$ defined on $\mathcal{C}$ is called \emph{corner permutive} if $f$ is permutive in the variable $x_{\mathbf{i}}$ for some $\mathbf{i} \in C$. Proposition \ref{prop:corner-permutive-onto} addresses a sufficient condition for the discrimination of surjection of a CA.

\begin{proposition}\label{prop:corner-permutive-onto}
If $f: \mathcal{A}^{\mathcal{C}} \to \mathcal{A}$ is corner permutive, where $\mathcal{C}$ is a hypercuboid. Then the CA $F$ defined by the local rule $f$ is surjective.
\end{proposition}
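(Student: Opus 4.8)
The plan is to prove surjectivity by showing that the image $F(\mathbf{X})$ is dense in $\mathbf{X}$; since $\mathbf{X}$ is compact and $F$ is continuous, $F(\mathbf{X})$ is closed, so density forces $F(\mathbf{X}) = \mathbf{X}$. Density amounts to the statement that for every finite box $B \subset \mathbb{Z}^d$ and every pattern $p \colon B \to \mathcal{A}$ there is a configuration $x \in \mathbf{X}$ with $F(x)|_B = p$. Because $F(x)|_B$ depends only on $x|_{B + \mathcal{C}}$, it suffices to prescribe $x$ on the finite set $B^{+} := B + \mathcal{C}$ so that the $|B|$ equations $f(x|_{\mathbf{i} + \mathcal{C}}) = p_{\mathbf{i}}$, $\mathbf{i} \in B$, all hold, and then extend $x$ to $\mathbb{Z}^d$ arbitrarily. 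After a translation we may assume $\mathcal{C} = \prod_{i=1}^{d}[0, n_i]$; let $\mathbf{c} = (c_1, \dots, c_d) \in C$ be a corner at which $f$ is permutive, so each $c_i \in \{0, n_i\}$.

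The engine of the argument is a grading of $\mathbb{Z}^d$ tailored to $\mathbf{c}$. Put $\varepsilon_i = +1$ if $c_i = n_i$ and $\varepsilon_i = -1$ if $c_i = 0$, and define the height $h(\mathbf{a}) = \sum_{i=1}^{d} \varepsilon_i a_i$. A direct check using $0 \le j_i \le n_i$ shows that $h(\mathbf{j}) \le h(\mathbf{c})$ for every $\mathbf{j} \in \mathcal{C}$, with equality only at $\mathbf{j} = \mathbf{c}$; hence $h(\mathbf{i} + \mathbf{j}) < h(\mathbf{i} + \mathbf{c})$ for any $\mathbf{i}$ and any $\mathbf{j} \in \mathcal{C} \setminus \{\mathbf{c}\}$. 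Regard the points $\mathbf{p} = \mathbf{i} + \mathbf{c}$ with $\mathbf{i} \in B$ as \emph{pivots} (these lie in $B^{+}$). Since $f$ is permutive in the variable indexed by $\mathbf{c}$, for each fixed choice of the values of $x$ at all sites of $\mathbf{i} + \mathcal{C}$ other than $\mathbf{i} + \mathbf{c}$, the equation $f(x|_{\mathbf{i} + \mathcal{C}}) = p_{\mathbf{i}}$ has a unique solution for $x_{\mathbf{p}}$, namely the preimage of $p_{\mathbf{i}}$ under the induced bijection $\mathcal{A} \to \mathcal{A}$.

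I would therefore define $x$ on $B^{+}$ as follows: set $x$ arbitrarily (say identically $0$) on every site of $B^{+}$ that is not a pivot, then process the pivots in order of increasing height (ties broken arbitrarily), at each pivot $\mathbf{p} = \mathbf{i} + \mathbf{c}$ using the equation above to solve for $x_{\mathbf{p}}$. The other sites entering that equation are the points $\mathbf{i} + \mathbf{j}$ with $\mathbf{j} \in \mathcal{C} \setminus \{\mathbf{c}\}$, all of which lie in $B^{+}$ and have height strictly below $h(\mathbf{p})$; such a site is either a non-pivot (already assigned) or a pivot already processed, so there is no circular dependence and the procedure yields a well-defined $x$ on $B^{+}$ satisfying all $|B|$ equations. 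Extending $x$ arbitrarily to $\mathbb{Z}^d$ then gives $F(x)|_B = p$, which proves density and hence surjectivity.

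I expect the only genuine obstacle to be setting up this order correctly — specifically, verifying that $h$ strictly decreases on $\mathcal{C} \setminus \{\mathbf{c}\}$ relative to $\mathbf{c}$, so that the layer-by-layer solving is well-founded; once that is in hand, the permutivity hypothesis supplies the needed local invertibility and the compactness of $\mathbf{X}$ finishes the proof. When $\mathcal{C}$ is the standard box and $\mathbf{c}$ is its componentwise-maximal vertex, $h$ is just the coordinate sum and the sweep is the evident multidimensional analogue of Hedlund's one-dimensional construction of preimages for rightmost-permutive rules.
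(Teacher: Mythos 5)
Your argument is correct, and it is essentially the proof the paper has in mind: the paper omits the details, asserting only that ``a careful and routine examination extends Hedlund's result,'' and your height function $h(\mathbf{a})=\sum_i \varepsilon_i a_i$ — uniquely maximized on $\mathcal{C}$ at the permutive corner $\mathbf{c}$ — is exactly the right device to make that routine extension well-founded, since it guarantees that every non-pivot site in the equation for $\mathbf{i}$ has height strictly below $h(\mathbf{i}+\mathbf{c})$ and hence that later pivot assignments never disturb earlier equations. Note that your construction genuinely uses the corner hypothesis (a linear functional is uniquely maximized only at a vertex of the box), which is consistent with the paper's Example \ref{eg:not-corner-permutive-not-mixing} showing that permutivity in an interior variable does not imply surjectivity.
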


An immediate application of Proposition \ref{prop:corner-permutive-onto} is that a linear CA with the local rule $f(x_{\mathcal{C}}) = \Sigma_{\mathbf{i} \in \mathcal{C}} a_{\mathbf{i}} x_{\mathbf{i}}$ is surjective if $\textrm{gcd}(a_{\mathbf{i}}, m) = 1$ for some $\mathbf{i}$ with $\mathbf{i}_j \in \{k_j, K_j\}$ for $1 \leq j \leq d$, where $\textrm{gcd}(p, q)$ means the greatest common divisor of $p$ and $q$.

Let $(X, \mathcal{B}, \mu)$ be a probability space and let $T: X \to X$ be measure-preserving transformation, i.e., $\mu(T^{-1} A) = \mu(A)$ for $A \in \mathcal{B}$. $T$ is called \emph{ergodic} if every measurable subset $A \subseteq X$ with $T^{-1} A = A$ satisfies $\mu(A) = 0$ or $\mu(A) = 1$. The following theorem addresses some equivalent conditions for the ergodicity of $T$.

\begin{theorem}[See \cite{Wal-1982}]
Suppose $T: X \to X$ is a measure-preserving transformation on a probability space $(X, \mathcal{B}, \mu)$. The following statements are equivalent.
\begin{enumerate}[\bf (i)]
\item $T$ is ergodic.
\item For every $A, B \in \mathcal{B}$ with $\mu(A) > 0, \mu(B) > 0$, there exists $n \in \mathbb{N}$ such that $\mu(A \cap T^{-n} B) > 0$.
\end{enumerate}
\end{theorem}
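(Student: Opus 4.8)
The plan is to establish the two implications separately; both reduce to elementary manipulations with the measure-preserving identity $\mu(T^{-1}E)=\mu(E)$ together with countable subadditivity of $\mu$.

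For the implication (i) $\Rightarrow$ (ii), I would fix $A,B\in\mathcal{B}$ with $\mu(A)>0$ and $\mu(B)>0$ and consider the ``eventual preimage'' $\widetilde{B}=\bigcup_{n\geq 1}T^{-n}B$. Since $T^{-1}\widetilde{B}=\bigcup_{n\geq 2}T^{-n}B\subseteq\widetilde{B}$ and $T$ is measure preserving, $\mu(\widetilde{B}\setminus T^{-1}\widetilde{B})=\mu(\widetilde{B})-\mu(T^{-1}\widetilde{B})=0$, so $\widetilde{B}$ is $T^{-1}$-invariant modulo a $\mu$-null set. Ergodicity, applied to a genuinely invariant set agreeing with $\widetilde{B}$ off a null set, then forces $\mu(\widetilde{B})\in\{0,1\}$; and because $T^{-1}B\subseteq\widetilde{B}$ has measure $\mu(B)>0$, we get $\mu(\widetilde{B})=1$. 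Consequently $\mu(A\cap\widetilde{B})=\mu(A)>0$, and writing $A\cap\widetilde{B}=\bigcup_{n\geq 1}(A\cap T^{-n}B)$, countable subadditivity yields some $n\geq 1$ with $\mu(A\cap T^{-n}B)>0$.

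For (ii) $\Rightarrow$ (i), I would take any $A\in\mathcal{B}$ with $T^{-1}A=A$ and argue by contradiction: if $0<\mu(A)<1$, then $A$ and $X\setminus A$ both have positive measure, so (ii) supplies an $n$ with $\mu\bigl(A\cap T^{-n}(X\setminus A)\bigr)>0$. But $T^{-1}A=A$ iterates to $T^{-n}A=A$, hence $T^{-n}(X\setminus A)=X\setminus A$, and the inequality collapses to $\mu(A\cap(X\setminus A))>0$, which is absurd. Therefore $\mu(A)\in\{0,1\}$ and $T$ is ergodic.

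The only point requiring care — and the natural place for a slip — is the first implication's passage from ``$\widetilde{B}$ is $T^{-1}$-invariant mod $\mu$'' to ``ergodicity applies to $\widetilde{B}$'', since the definition of ergodicity as stated quantifies over strictly invariant sets. I would dispatch this by the classical device of replacing $\widetilde{B}$ with the set $\bigcap_{N\geq 0}\bigcup_{n\geq N}T^{-n}\widetilde{B}$, which is strictly $T^{-1}$-invariant and differs from $\widetilde{B}$ by a null set, or simply by citing this standard lemma; everything else is bookkeeping.
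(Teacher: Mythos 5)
Your argument is correct, and it is essentially the standard proof of this equivalence (Theorem 1.5 in Walters), which the paper itself does not reprove but simply cites: the forward direction via the mod-$\mu$ invariance of $\bigcup_{n\geq 1}T^{-n}B$ and the converse by applying (ii) to a strictly invariant $A$ and its complement. You also correctly flag and repair the one delicate point, namely upgrading invariance mod a null set to strict invariance before invoking the definition of ergodicity; nothing further is needed.
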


A stronger property for a measure-preserving transformation is \emph{strongly mixing}.

\begin{definition}
Let $(X, \mathcal{B}, \mu)$ be a probability space and let $T: X \to X$ be measure-preserving transformation.
\begin{enumerate}[\bf (i)]
\item $T$ is called \emph{strongly mixing} if
$$
\lim_{n \to \infty} \mu(A \cap T^{-n} B) = \mu(A) \mu(B)
$$
for every $A, B \in \mathcal{B}$.
\item $T$ is called \emph{$k$-mixing} if
$$
\lim_{n_1, \ldots, n_k \to \infty} \mu(A_0 \cap T^{-n_1} A_1 \cap \cdots \cap T^{-n_k} A_k) = \mu(A_0) \mu(A_1) \cdots \mu(A_k)
$$
for every $\{A_i\}_{i=0}^k \in \mathcal{B}$.
\end{enumerate}
\end{definition}

It comes immediately that, for a measure-preserving transformation $T$:
\begin{center}
$k$-mixing \quad $\Rightarrow$ \quad strongly mixing \quad $\Rightarrow$ \quad ergodic
\end{center}

\section{Mixing Property for Local Rules on Hypercuboid}

This section studies the mixing properties of multidimensional cellular automata with the local rules defined on the hypercuboid. Let $\pi_j: \mathbb{R}^d \rightarrow \mathbb{R}$ be the canonical projection on the $j$th coordinate, i.e., $\pi_j(v) = v_j$, where $v = (v_1, \ldots, v_d) \in \mathbb{R}^d$ and $1 \leq j \leq d$.

\begin{theorem}\label{thm:rectangle-k-mixing}
Suppose $F$ is a linear CA with the local rule $f(x_{\mathcal{C}}) = \Sigma_{\mathbf{i} \in \mathcal{C}} a_{\mathbf{i}} x_{\mathcal{C}}$ defined on a $d$-dimensional hypercuboid $\mathcal{C}$, where $\mathcal{C}$ is the convex hull generated by the set
$$
C = \{\mathbf{i} = (i_1, \ldots, i_d): i_j \in \{k_j, K_j\}, 1 \leq j \leq d\}
$$
and $\{k_j, K_j\}_{j=1}^d \subset \mathbb{Z}$. If $\textrm{gcd}(a_{\mathbf{i}}, m) = 1$ for some $\mathbf{i}$ such that
\begin{equation}\label{cond:rectangle-mixing}
i_j \left\{
\begin{array}{ll}
> 0, & i_j = K_j; \\
< 0, & i_j = k_j.
\end{array}
\right.
\end{equation}
then $F$ is $k$-mixing with respect to the uniform Bernoulli measure $\mu$ for $k \geq 1$.
\end{theorem}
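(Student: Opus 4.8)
The plan is to reduce the verification of $k$-mixing to cylinder sets and then prove, by an explicit stepwise conditioning argument, that
$\mu\big(A_0\cap F^{-n_1}A_1\cap\cdots\cap F^{-n_k}A_k\big)=\mu(A_0)\mu(A_1)\cdots\mu(A_k)$
as soon as the times $n_1,\dots,n_k$ are pairwise far apart. Since the cylinders form a semialgebra generating the Borel $\sigma$-algebra and $F$ is measure-preserving — indeed $f$ is corner permutive (the coordinate $\mathbf i^{\ast}$ of the hypothesis lies in $C$ and $\gcd(a_{\mathbf i^{\ast}},m)=1$), hence $F$ is surjective by Proposition~\ref{prop:corner-permutive-onto}, and a surjective linear cellular automaton is a surjective continuous endomorphism of the compact group $\mathbf X$ and so preserves its Haar measure $\mu$ — it suffices to take $A_0,\dots,A_k$ cylinders supported on finite boxes $R_0,\dots,R_k$ and, after relabeling, to assume $0<n_1\le\cdots\le n_k$. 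I will in fact produce a threshold $N_0=N_0(A_0,\dots,A_k)$ so that the displayed equality holds verbatim once $n_1\ge N_0$ and $n_{j+1}-n_j\ge N_0$ for all $j$; this yields the $k$-mixing limit (and a bit more).

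First I would record the structure of the iterates. Because $F$ is linear, $F^{n}$ is the linear cellular automaton whose neighbourhood is the $n$-fold Minkowski sum of $\mathcal C$ — again a hypercuboid, the dilate $n\mathcal C$ — and whose coefficients are the $n$-fold convolution of $(a_{\mathbf i})_{\mathbf i\in\mathcal C}$. Let $\mathbf i^{\ast}\in C$ be as in the hypothesis; by \eqref{cond:rectangle-mixing}, for each $j$ either $i^{\ast}_j=K_j>0$ or $i^{\ast}_j=k_j<0$, so $|i^{\ast}_j|\ge1$ and in coordinate $j$ the value $i^{\ast}_j$ is the extreme value attained by $\mathcal C$ on the side of its own sign. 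Consequently $n\mathbf i^{\ast}$ has the \emph{unique} representation $\mathbf i^{\ast}+\cdots+\mathbf i^{\ast}$ as a sum of $n$ elements of $\mathcal C$ — coordinatewise each summand is pinned to the extreme value — so the coefficient of $F^{n}$ at its corner $n\mathbf i^{\ast}$ is $a_{\mathbf i^{\ast}}^{\,n}$, still a unit mod $m$. For $\ell\ge1$ put $W_\ell=R_\ell+n_\ell\mathcal C$ (the coordinates on which $\mathbf 1_{A_\ell}\circ F^{n_\ell}$ depends), $W_0=R_0$, and let $E_\ell=R_\ell+n_\ell\mathbf i^{\ast}\subseteq W_\ell$ be the \emph{corner window}: a translate of $R_\ell$ pinned at the runaway corner $n_\ell\mathbf i^{\ast}$.

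The key geometric step, and the part I expect to demand the most care, is an \emph{escape lemma}: for $t>\ell$ one has $E_t\cap W_\ell=\emptyset$ provided $n_t-n_\ell$ exceeds a bound depending only on $R_0,\dots,R_k$. Indeed, fix any coordinate $j$: every point of $E_t$ has $j$-th coordinate within a bounded distance of $n_t\,i^{\ast}_j$, whereas the $j$-th coordinates of $W_\ell$ fill an interval whose endpoint on the side of $\operatorname{sign}(i^{\ast}_j)$ is within a bounded distance of $n_\ell\,i^{\ast}_j$; since $|i^{\ast}_j|\ge1$ and $n_t>n_\ell$, once $n_t-n_\ell$ is large the former interval lies strictly beyond the latter in coordinate $j$, forcing $E_t\cap W_\ell=\emptyset$. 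The same estimate (with $n_t$ large) gives $E_t\cap R_0=\emptyset$. Hence the $E_\ell$ are pairwise disjoint and disjoint from $R_0$, and $W_j\cap E_t=\emptyset$ whenever $t>j$; so, writing $\mathcal G_0:=\mathbb Z^d\setminus\bigcup_{\ell\ge1}E_\ell$ and $\mathcal F_\ell:=\sigma\big(x_{\mathbf p}:\mathbf p\in\mathcal G_0\cup E_1\cup\cdots\cup E_\ell\big)$ (with $\mathcal F_0=\sigma(x_{\mathbf p}:\mathbf p\in\mathcal G_0)$), the function $\mathbf 1_{A_j}\circ F^{n_j}$ is $\mathcal F_j$-measurable for every $j\ge0$.

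Finally I would telescope. For each $\ell\ge1$, fix a configuration off $E_\ell$; then $x|_{E_\ell}\mapsto F^{n_\ell}(x)|_{R_\ell}$ is affine, with a square $R_\ell\times R_\ell$ transition matrix whose entry indexed by $(\mathbf p,\mathbf q)$ vanishes unless $\mathbf q\preceq\mathbf p$ in the partial order on $R_\ell$ dictated by the sign pattern of $\mathbf i^{\ast}$ (this is exactly the coefficient support computed above) and whose diagonal entries equal $a_{\mathbf i^{\ast}}^{\,n_\ell}$; hence the matrix is triangular with unit diagonal, invertible over $\mathbb Z/m\mathbb Z$, and the map is a bijection $\mathcal A^{E_\ell}\to\mathcal A^{R_\ell}$. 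Since the coordinates are i.i.d.\ uniform and $E_\ell$ is disjoint from the coordinate set of $\mathcal F_{\ell-1}$, the value $F^{n_\ell}(x)|_{R_\ell}$ is uniform on $\mathcal A^{R_\ell}$ conditionally on $\mathcal F_{\ell-1}$, so $\mathbb E\big[\mathbf 1_{A_\ell}\circ F^{n_\ell}\mid\mathcal F_{\ell-1}\big]=\mu(A_\ell)$, while $\mathbf 1_{A_0}$ and $\mathbf 1_{A_j}\circ F^{n_j}$ for $j\le\ell-1$ are $\mathcal F_{\ell-1}$-measurable. Peeling off the factors in the order $\ell=k,k-1,\dots,1$ then collapses $\mathbb E\big[\mathbf 1_{A_0}\prod_{\ell=1}^{k}\mathbf 1_{A_\ell}\circ F^{n_\ell}\big]$ to $\prod_{\ell=0}^{k}\mu(A_\ell)$, as required. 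The whole argument hinges on the escape lemma: it is precisely the strict sign condition \eqref{cond:rectangle-mixing} that forces the corner windows $E_\ell$ to leave every competing window $W_j$ and the base window $R_0$, which is also why the hypothesis is optimal.
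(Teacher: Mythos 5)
Your proof is correct and follows essentially the same route as the paper's: both rest on the observation that the corner coefficient of $F^{n}$ equals $a_{\mathbf{i}^{\ast}}^{\,n}$ (still a unit mod $m$), so that $f^{n}$ is permutive at the scaled corner $n\mathbf{i}^{\ast}$, on the sign condition \eqref{cond:rectangle-mixing} forcing that corner to escape every relevant coordinate window, and on permutivity at the escaped coordinates to obtain the exact product formula on cylinders. Your triangular-matrix and conditional-expectation packaging of the final counting step is a cleaner, genuinely $d$-dimensional rendering of the paper's one-constraint-at-a-time computation (which is only written out for $d=2$), but the substance is identical.
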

\begin{proof}
Denote a $d$-dimensional cylinder $\mathbf{C}$ by
\begin{align*}
\mathbf{C} &= <(\mathbf{v}_1, c_1), (\mathbf{v}_2, c_2), \ldots, (\mathbf{v}_n, c_n)> \\
  &= \{x \in \mathcal{A}^{\mathbb{Z}^d}: x_{\mathbf{v}_i} = c_i, 1 \leq i \leq n\},
\end{align*}
where $\mathbf{v}_i \in \mathbb{Z}^d$ and $c_i \in \mathcal{A}$ for all $i$.

The proof of Theorem \ref{thm:rectangle-k-mixing} is divided into several steps. In addition, demonstration of Theorem \ref{thm:rectangle-k-mixing} for the case $d = 2$ is addressed to clarify the procedures. Proof for the general case is analogous, thus is omitted.

For the case that there exist $l, r, b, t \in \mathbb{Z}$ with $l \leq r$ and $b \leq t$ such that
$$
\{\mathbf{v}_i\}_{i=1}^n = \{(j_1, j_2): l \leq j_1 \leq r, b \leq j_2 \leq t\}
$$
the cylinder $\mathbf{C}$ is then denoted as
$$
\mathbf{C} = \left<
\begin{array}{lll}
  c_{(l, t)} & \cdots & c_{(r, t)} \\
  \vdots &\ddots &\vdots  \\
  c_{(l, b)} & \cdots & c_{(r, b)}
\end{array}
\right>^{(r, t)}_{{\hspace{-10em}(l, b)}}.
$$
Notably, in this case,
$$
\{c_{\mathbf{v}_i}\}_{i=1}^n = \{c_{(j_1, j_2)}: l \leq j_1 \leq r, b \leq j_2 \leq t\}.
$$

When $d = 2$, the local rule $f: \mathcal{A}^{(r-l+1) \times (t-b+1)} \to \mathcal{A}$ is defined on a rectangle and is expressed as
$$
f \left(
\begin{array}{lll}
  x_{l, t} & \cdots & x_{r, t} \\
  \vdots &\ddots &\vdots  \\
  x_{l, b} & \cdots & x_{r, b}
\end{array}
\right)
= \sum_{l \leq j_1 \leq r, b \leq j_2 \leq t} a_{j_1, j_2} x_{j_1, j_2} \pmod m.
$$
Then $F$ is $k$-mixing with respect to the uniform Bernoulli measure for $k \geq 1$ if its local rule $f$ satisfies either one of the following conditions.
\begin{enumerate}[\bf (i)]
  \item $\textrm{gcd}(a_{r, t}, m) = 1$ and $r, t > 0$.
  \item $\textrm{gcd}(a_{r, b}, m) = 1$ and $r > 0, b < 0$.
  \item $\textrm{gcd}(a_{l, b}, m) = 1$ and $l, b < 0$.
  \item $\textrm{gcd}(a_{l, t}, m) = 1$ and $l < 0, t > 0$.
\end{enumerate}

\noindent \textbf{Step 1.} Let $\mathcal{L}$ be the collection of linear local rules and let
$$
\mathbb{Z}_m[x, x^{-1}, y, y^{-1}] = \{\sum_{p_1 \leq i \leq p_2, q_1 \leq j \leq q_2} a_{i, j} x^i y^j, p_1, p_2, q_1, q_2 \in \mathbb{Z}, a_{i, j} \in \mathbb{Z}_m\}
$$
where $\mathbb{Z}_m = \{0, 1, \ldots, m-1\}$ is the ring of the integers modulo $m$. Define $\chi: \mathcal{L} \to \mathbb{Z}_m[x, x^{-1}]$ as
$$
\chi (\sum_{l \leq j_1 \leq r, b \leq j_2 \leq t} a_{j_1, j_2} x_{j_1, j_2}) = \sum_{l \leq j_1 \leq r, b \leq j_2 \leq t} a_{j_1, j_2} x^{-j_1} y^{-j_2}.
$$
It follows that $\chi$ is a bijective map. Moreover, let $\mathbb{Z}_m[[x, x^{-1}, y, y^{-1}]]$ denote the ring of formal power series generated by $\{x, x^{-1}, y, y^{-1}\}$ over $\mathbb{Z}_m$. Define $\widehat{\chi}: \mathbf{X} = \mathcal{A}^{\mathbb{Z}^2} \to \mathbb{Z}_m[[x, x^{-1}, y, y^{-1}]]$ as
$$
\widehat{\chi}(\mathbf{b})=\sum^\infty_{j_1, j_2=-\infty} b_{j_1, j_2} x^{j_1} y^{j_2}, \quad \text{where} \quad \mathbf{b} = (b_{j_1, j_2})_{j_1, j_2 \in \mathbb{Z}} \in \mathbf{X}.
$$
A straightforward verification indicates that $\widehat{\chi}$ is one-to-one and onto. Observe that, for each $\mathbf{b} = (b_{j_1, j_2}) \in \mathbf{X}$,
\begin{align*}
\widehat{\chi}(F(\mathbf{b})) &= \widehat{\chi}\left[\left(\sum_{l \leq p \leq r, b \leq q \leq t} a_{p, q} b_{j_1+p, j_2+q}\right)_{j_1, j_2 \in \mathbb{Z}}\right] \\
 &= \sum^\infty_{j_1, j_2=-\infty} \left(\sum_{l \leq p \leq r, b \leq q \leq t} a_{p, q} b_{j_1+p, j_2+q}\right) x^{j_1} y^{j_2}.
\end{align*}
Furthermore, let $\mathbb{F}: \mathbb{Z}_m[[x, x^{-1}, y, y^{-1}]] \to \mathbb{Z}_m[[x, x^{-1}, y, y^{-1}]]$ be given by
$$
\mathbb{F}(P(x, x^{-1}, y, y^{-1})) = \chi (f) \cdot P(x, x^{-1}, y, y^{-1}).
$$
The equality
\begin{align*}
\mathbb{F}(\widehat{\chi}(\mathbf{b})) &= \mathbb{F} \left(\sum^\infty_{j_1, j_2=-\infty} b_{j_1, j_2} x^{j_1} y^{j_2}\right) \\
 &= \left(\sum_{l \leq p \leq r, b \leq q \leq t} a_{p, q} x^{-p} y^{-q}\right) \left(\sum^\infty_{j_1, j_2=-\infty} b_{j_1, j_2} x^{j_1} y^{j_2}\right) \\
 &= \sum^\infty_{j_1, j_2=-\infty} \left(\sum_{l \leq p \leq r, b \leq q \leq t} a_{p, q} b_{j_1+p, j_2+q}\right) x^{j_1} y^{j_2}
\end{align*}
together with the above equation demonstrate that the diagram
$$
\xymatrix{
\mathbf{X} = \mathcal{A}^{\mathbb{Z}^2} \ar[rr]^F \ar[d]_{\widehat{\chi}} && \mathbf{X} = \mathcal{A}^{\mathbb{Z}^2} \ar[d]^{\widehat{\chi}} \\
\mathbb{Z}_m[[x,x^{-1}, y, y^{-1}]] \ar[rr]_{\mathbb{F}} && \mathbb{Z}_m[[x,x^{-1}, y, y^{-1}]]}
$$
commutes.

\noindent \textbf{Step 2.} Without ambiguousness we abuse the notation $T^n = T \circ T^{n-1}$ to indicate the $n$th iteration of $T$, and abuse $\mathbb{F}^n$ to means the $n$th power of $(\chi (f))$. Namely, $\mathbb{F}^n = (\chi (f))^n$. It comes from the definitions that $F^n$ is a CA with the local rule $f^n$.

Notably, the mathematical induction infers $f^n = \chi^{-1}(\mathbb{F}^n)$ for all positive integer $n$. It is seen that $\mathrm{gcd}(a_{r, t}, m) = 1$ is followed by $\mathrm{gcd}(a_{r, t}^n, m) = 1$ for $n \in \mathbb{N}$. Combining above facts with $\mathbb{F}^n = a_{r, t}^n x^{-nr} y^{-nt} + \Sigma_{(j_1, j_2) \lneqq (nr, nt)} c_{j_1, j_2} x^{-j_1} y^{-j_2}$ for some $c_{j_1, j_2}$ demonstrate that $f$ is permutive in the variable $x_{r, t}$ leads to $f^n$ is permutive in the variable $x_{nr, nt}$. More specifically, if $f$ is corner permutive, then $f^n$ remains corner permutive in the same direction.

\noindent \textbf{Step 3.} Suppose $\mathbf{C}_0 = <(\mathbf{v}_1^0, c_1^0), \ldots, (\mathbf{v}_{l_0}^0, c_{l_0}^0)>$ and $\mathbf{C}_1 = <(\mathbf{v}_1^1, c_1^1), \ldots, (\mathbf{v}_{l_1}^1, c_{l_1}^1)>$ are two cylinders in $\mathbf{X}$, and $\mathrm{gcd}(a_{r, t}^n, m) = 1$ with $r, t \in \mathbb{N}$. For each finite subset $C \subset \mathbb{Z}^2$, define $M_i, m_i: C \to \mathbb{Z}$ as
\begin{align*}
M_i (C) &= \max\{v_i: v = (v_1, v_2) \in C\}, \\
m_i (C) &= \min\{v_i: v = (v_1, v_2) \in C\}
\end{align*}
for $i = 1, 2$. Set
$$
n_0 = \max \left\{ \left\lceil \dfrac{M_1(\mathbf{C}_0) - m_1(\mathbf{C}_1)}{r} \right\rceil, \left\lceil \dfrac{M_2(\mathbf{C}_0) - m_2(\mathbf{C}_1)}{t} \right\rceil \right\},
$$
where $\lceil k \rceil$ denotes the smallest integer that is greater than or equal to $k$. It comes immediately that $m_1(\mathbf{C}_1) + nr > M_1(\mathbf{C}_0)$ and $m_2(\mathbf{C}_1) + nt > M_2(\mathbf{C}_0)$ for all $n > n_0$.

Step 2 illustrates that $F^n$ is a CA with the local rule $f^n$ which is a permutation at $x_{nr, nt}$. Fix $i = 1, 2, \ldots, l_0$, for each given
$$
\{x_{\mathbf{v}^1_{i,1} + p, \mathbf{v}^1_{i,2} + q}\}_{nl \leq p \leq nr, nb \leq q \leq nt, (p, q) \neq (nr, nt)}, \text{ where } \mathbf{v}^1_i = (\mathbf{v}^1_{i,1}, \mathbf{v}^1_{i,2}),
$$
there is a unique $x_{\mathbf{v}_{i,1}^1+nr, \mathbf{v}_{i,1}^1+nt} \in \mathcal{A}$ such that
$$
a^1_i =f^n\left(
      \begin{array}{ccc}
      x_{\mathbf{v}_{i,1}^1+nl, \mathbf{v}_{i,2}^1+nt} &\cdots & x_{\mathbf{v}_{i,1}^1+nr, \mathbf{v}_{i,1}^1+nt} \\
      \vdots &\ddots &\vdots  \\
      x_{\mathbf{v}_{i,1}^1+nl, \mathbf{v}_{i,2}^1+nb} &\cdots & x_{\mathbf{v}_{i,1}^1+nr, \mathbf{v}_{i,2}^1+nt} \\
      \end{array}
    \right).
$$
Furthermore, $F^{-n}(\mathbf{C}_1) = \bigcap_{i=1}^{l_1} F^{-n}(<(\mathbf{v}^1_i, a^1_i)>)$, where
\begin{align*}
&F^{-n}(<(\mathbf{v}^1_i, a^1_i)>) = \left\{ \left<
      \begin{array}{ccc}
      x_{\mathbf{v}_{i,1}^1+nl, \mathbf{v}_{i,2}^1+nt} &\cdots & x_{\mathbf{v}_{i,1}^1+nr, \mathbf{v}_{i,1}^1+nt} \\
      \vdots &\ddots &\vdots  \\
      x_{\mathbf{v}_{i,1}^1+nl, \mathbf{v}_{i,2}^1+nb} &\cdots & x_{\mathbf{v}_{i,1}^1+nr, \mathbf{v}_{i,2}^1+nt} \\
      \end{array}
    \right>:\right. \\
  &\quad \left. f^n\left(
      \begin{array}{ccc}
      x_{\mathbf{v}_{i,1}^1+nl, \mathbf{v}_{i,2}^1+nt} &\cdots & x_{\mathbf{v}_{i,1}^1+nr, \mathbf{v}_{i,1}^1+nt} \\
      \vdots &\ddots &\vdots  \\
      x_{\mathbf{v}_{i,1}^1+nl, \mathbf{v}_{i,2}^1+nb} &\cdots & x_{\mathbf{v}_{i,1}^1+nr, \mathbf{v}_{i,2}^1+nt} \\
      \end{array}
    \right) = a_i^1, 1 \leq i \leq l_1 \right\}.
\end{align*}

\noindent \textbf{Step 4.} To see that $\mu(\mathbf{C}_0 \bigcap F^{-n}\mathbf{C}_1) = \mu(\mathbf{C}_0) \mu(\mathbf{C}_1)$, the discussion relies on the cases that $l$ and $b$ are positive/negative/zero. The case that $l, b < 0$ is addressed herein, the other cases can be elucidated analogously. Since $l$ and $b$ are both negative, it is seen that
\begin{equation}\label{ineq:rectangle-mixing1}
m_1(\mathbf{C}_1) + nl < m_1(\mathbf{C}_0) \leq M_1(\mathbf{C}_0) < m_1(\mathbf{C}_1)+nr < M_1(\mathbf{C}_1) + nr,
\end{equation}
and
\begin{equation}\label{ineq:rectangle-mixing2}
m_2(\mathbf{C}_1) + nb < m_2(\mathbf{C}_0) \leq M_2(\mathbf{C}_0) < m_2(\mathbf{C}_1) + nt < M_2(\mathbf{C}_1) + nt.
\end{equation}
Notably, for $1 \leq i \leq l_1$, the cardinality of $F^{-n}(<(\mathbf{v}^1_i, a^1_i)>)$ is $(nr - nl+1)(nt-nb+1)-1$. Equations \eqref{ineq:rectangle-mixing1} and \eqref{ineq:rectangle-mixing2} infers that the coordinates of $\mathbf{C}_0$ are covered by the coordinates of $F^{-n}(<(\mathbf{v}^1_i, a^1_i)>)$, more precisely,
$$
\{\mathbf{v}_i^0\}_{i=1}^{l_0} \subsetneq \{(\mathbf{v}^1_{i,1} + p, \mathbf{v}^1_{i,2} + q)\}_{nl \leq p \leq nr, nb \leq q \leq nt, (p, q) \neq (nr, nt)}.
$$
Therefore,
$$
\mu(\mathbf{C}_0 \bigcap F^{-n}(<(\mathbf{v}^1_1, a^1_1)>)) = \left(\dfrac{1}{m}\right)^{l_0} \cdot \dfrac{1}{m}.
$$
Similar discussion reveals that
$$
\mu(\mathbf{C}_0 \bigcap F^{-n}(<(\mathbf{v}^1_1, a^1_1)>) \bigcap F^{-n}(<(\mathbf{v}^1_2, a^1_2)>)) = \left(\dfrac{1}{m}\right)^{l_0} \cdot \left(\dfrac{1}{m}\right)^2.
$$
Repeating the procedures demonstrates
$$
\mu(\mathbf{C}_0 \bigcap F^{-n} \mathbf{C}_1) = \left(\dfrac{1}{m}\right)^{l_0} \cdot \left(\dfrac{1}{m}\right)^{l_1} = \mu(\mathbf{C}_0) \mu(\mathbf{C}_1)
$$
for $n > n_0$.

\noindent \textbf{Step 5.} For a fixed positive integer $k$, to prove that $F$ is $k$-mixing, it suffices to show
$$
\mu(\mathbf{C}_0 \bigcap F^{-n_1} \mathbf{C}_1 \bigcap \cdots \bigcap F^{-(n_1+...+n_k)} \mathbf{C}_k)=\mu(\mathbf{C}_0)\mu(\mathbf{C}_1)\cdots\mu(\mathbf{C}_k)
$$
for any cylinders $\mathbf{C}_0, \mathbf{C}_1, \cdots, \mathbf{C}_k \subset \mathbf{X}$ and $n_1, n_2, \ldots, n_k \in \mathbb{N}$ large enough. Suppose these cylinders are given as
$$
\mathbf{C}_i=<(\mathbf{v}_1^i, a_1^i), (\mathbf{v}_2^i, a_2^i), \cdots, (\mathbf{v}_{l_i}^i, a_{l_i}^i)>, \quad i = 0, 1, \ldots, k.
$$
Set
$$
n_0 = \max_{1 \leq i \leq k} \left\{ \left\lceil \dfrac{M_1(\mathbf{C}_{i-1})-m_1(\mathbf{C}_i)}{r} \right\rceil, \left\lceil \dfrac{M_2(\mathbf{C}_{i-1})-m_2(\mathbf{C}_i)}{t} \right\rceil \right\}.
$$
Pick $n_1, n_2, \ldots, n_k \geq n_0$, and let
$$
N_0 = 0, N_i=\sum_{j=1}^{i} n_j \text{ for } 1 \leq i \leq k.
$$
Define
\begin{align*}
i_- &=\min\{m_1(\mathbf{C}_i)+lN_i : 0 \leq i \leq k\}, & i_+ &= M_1(\mathbf{C}_k)+rN_k, \\
j_- &=\min\{m_2(\mathbf{C}_i)+bN_i : 0 \leq i \leq k\}, & j_+ &= M_2(\mathbf{C}_k)+tN_k.
\end{align*}
Similar elucidation to the discussion in Steps $2 \sim 4$ reveals that the set $\bigcap_{i=0}^k F^{-N_i} \mathbf{C}_i$ is the intersection of cylinders of the form
$$
\mathbf{C} = \quad
    \left\langle
      \begin{array}{ccc}
      a_{(i_-,j_+)} &\cdots & a_{(i_+,j_+)} \\
      \vdots &\ddots &\vdots  \\
      a_{(i_-,j_-)} &\cdots & a_{(i_+,j_-)} \\
      \end{array}
    \right\rangle^{(i_+,j_+)}_{\hspace{-13.2em}(i_-,j_-)}
$$
and the coordinates of $F^{-N_{i-1}} \mathbf{C}_{i-1}$ are covered by the coordinates of $F^{-N_i} \mathbf{C}_i$ for $i = 1, 2, \ldots, k$. This leads to the desired equality
$$
\mu \left(\bigcap_{i=0}^k F^{-N_i} \mathbf{C}_i \right) = \mu(\mathbf{C}_0) \mu(\mathbf{C}_1) \cdots\mu(\mathbf{C}_k).
$$
Namely, $F$ is $k$-mixing with respect to the uniformly Bernoulli measure $\mu$ for $k \geq 1$.

The other cases can be done analogously, this completes the proof.
\end{proof}

\begin{remark} \label{rmk:rectangle-k-mixing-permutive}
It is remarkable that Theorem \ref{thm:rectangle-k-mixing} remains true for any CA whose local rule is permutive in the variable $x_{\mathbf{i}}$ satisfying \eqref{cond:rectangle-mixing}.
\end{remark}

\begin{example}\label{eg:2d-self-replication}
Self-replicating pattern generation is an interesting topic in nonlinear science. A \emph{motif} is considered as a basic pattern. Self-replicating pattern generation is the process of transforming copies of the motif about the space in order to create the whole repeating pattern with no overlaps and blanks. The following proposes a cellular automaton which is capable of self-replication. Moreover, Theorem \ref{thm:rectangle-k-mixing} reveals its dynamical behavior.

Suppose $F$ is a two-dimensional CA over alphabet $\mathcal{A} = \{0, 1\}$ whose local rule $f: \mathcal{A}^{\mathcal{C}} \to \mathcal{A}$ is given by
$$
f (x_{\mathcal{C}}) = x_{0, 0} + x_{0, 1} + x_{1, 1} \pmod 2,
$$
where $\mathcal{C} = \{(0, 0), (1, 0), (0, 1), (1, 1)\}$. Figure \ref{fig:2d-self-replication} indicates that $F$ self-replicate initial patterns at the $2^n$th step for $n \geq 4$. More precisely, three copies of the initial patterns are reproduced, as we can see, at the 16th time step (center figure) and the 32nd time step (right bottom figure). Meanwhile, some interesting patterns are observed in the transformation. The numerical experiment is carried out in a $100 \times 100$ grids with periodic boundary condition.

Moreover, Proposition \ref{prop:corner-permutive-onto} and Theorem \ref{thm:rectangle-k-mixing} demonstrate that $F$ is surjective and $k$-mixing for $k \in \mathbb{N}$.

\begin{figure}
\begin{center}
\includegraphics[scale=0.3]{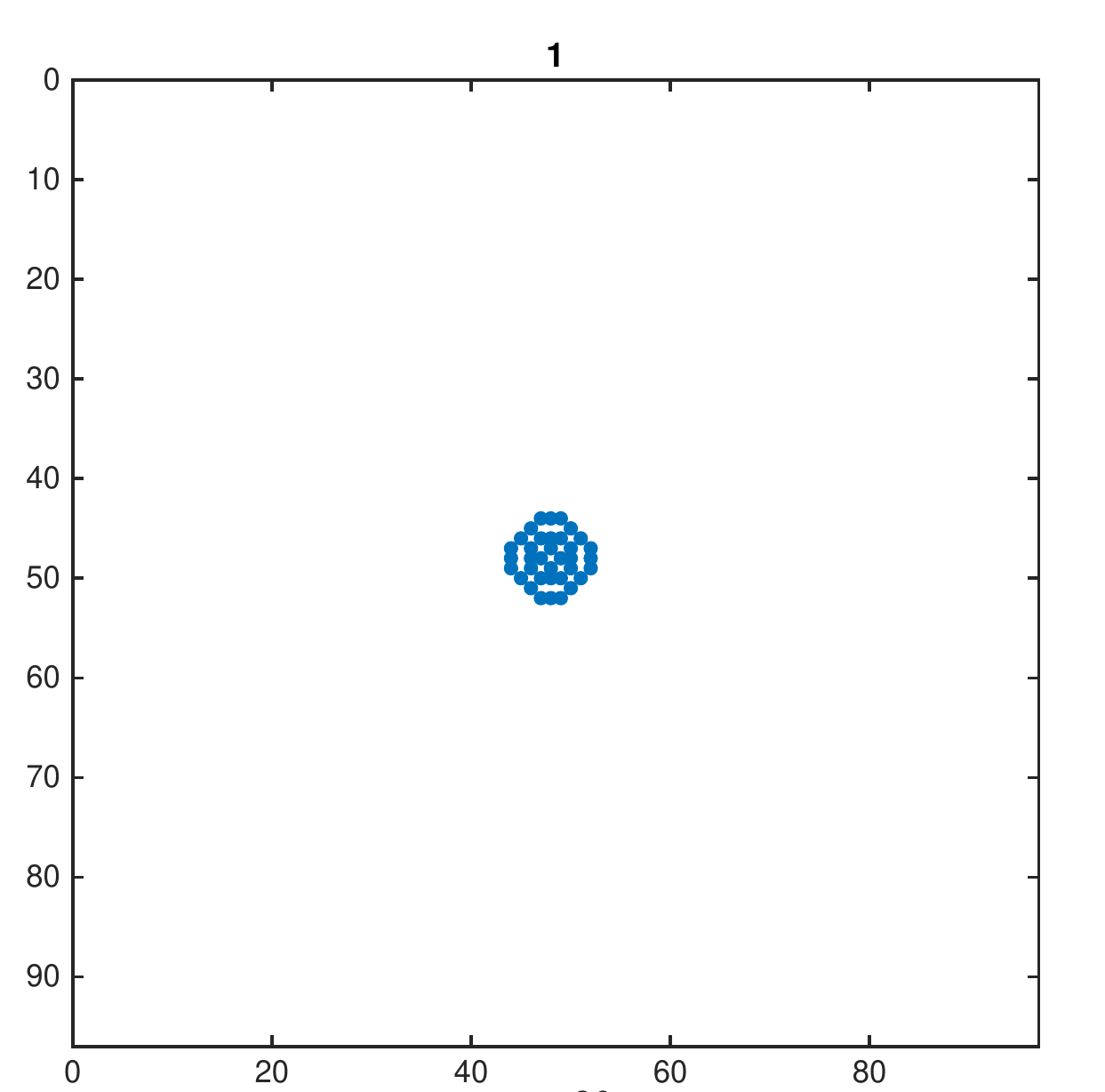} \quad
\includegraphics[scale=0.3]{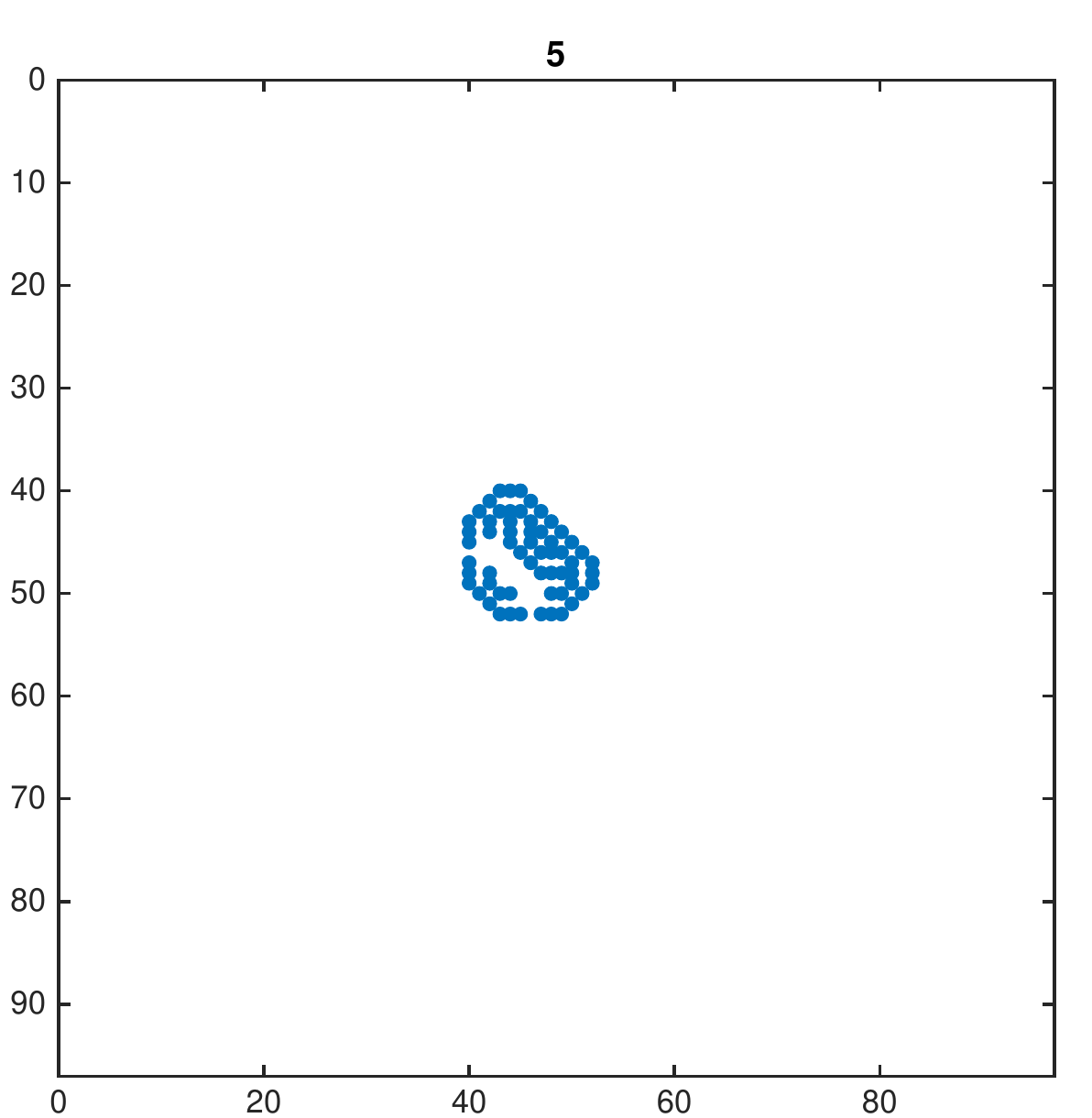} \quad
\includegraphics[scale=0.3]{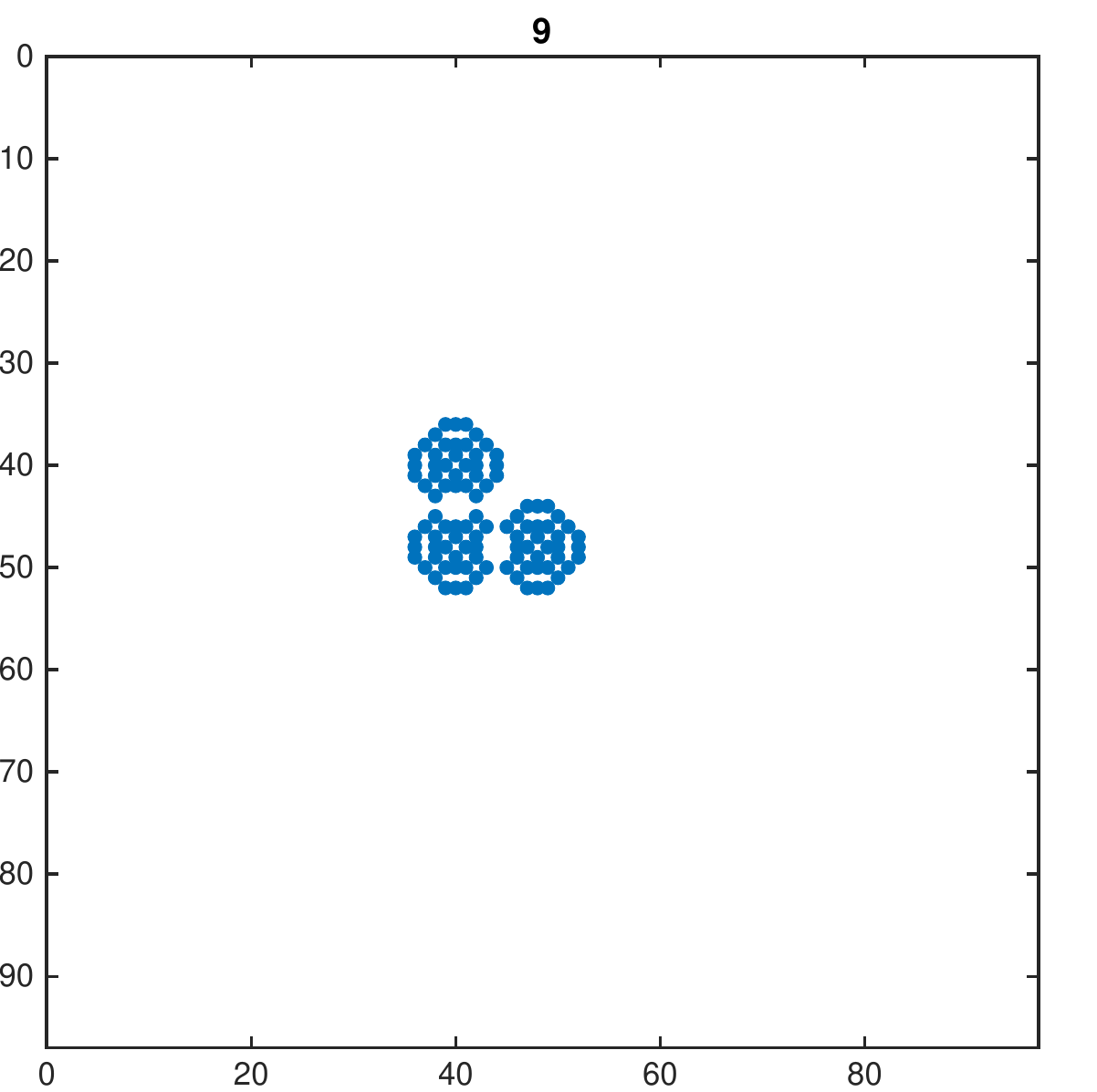} \\
\includegraphics[scale=0.3]{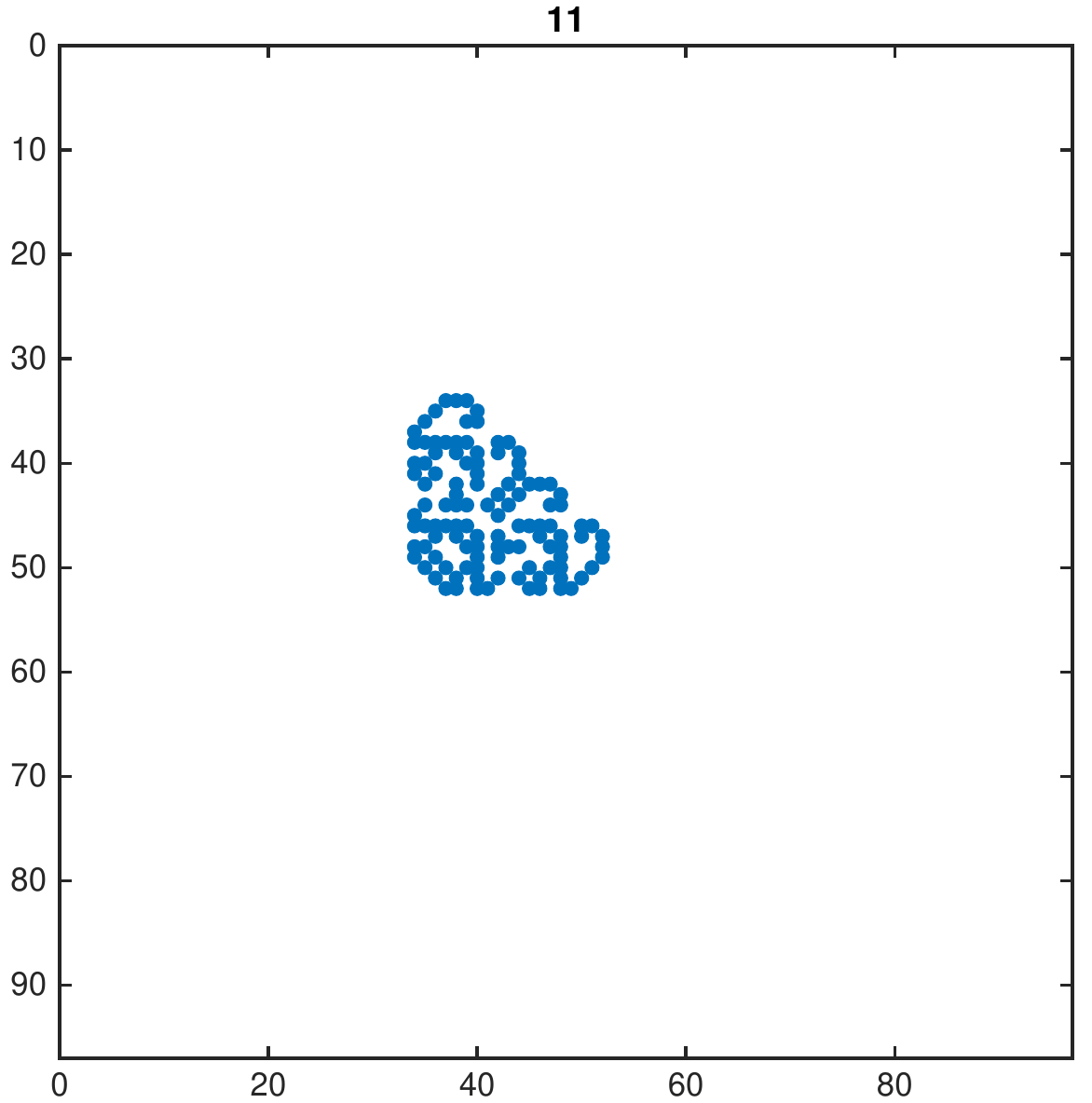} \quad
\includegraphics[scale=0.3]{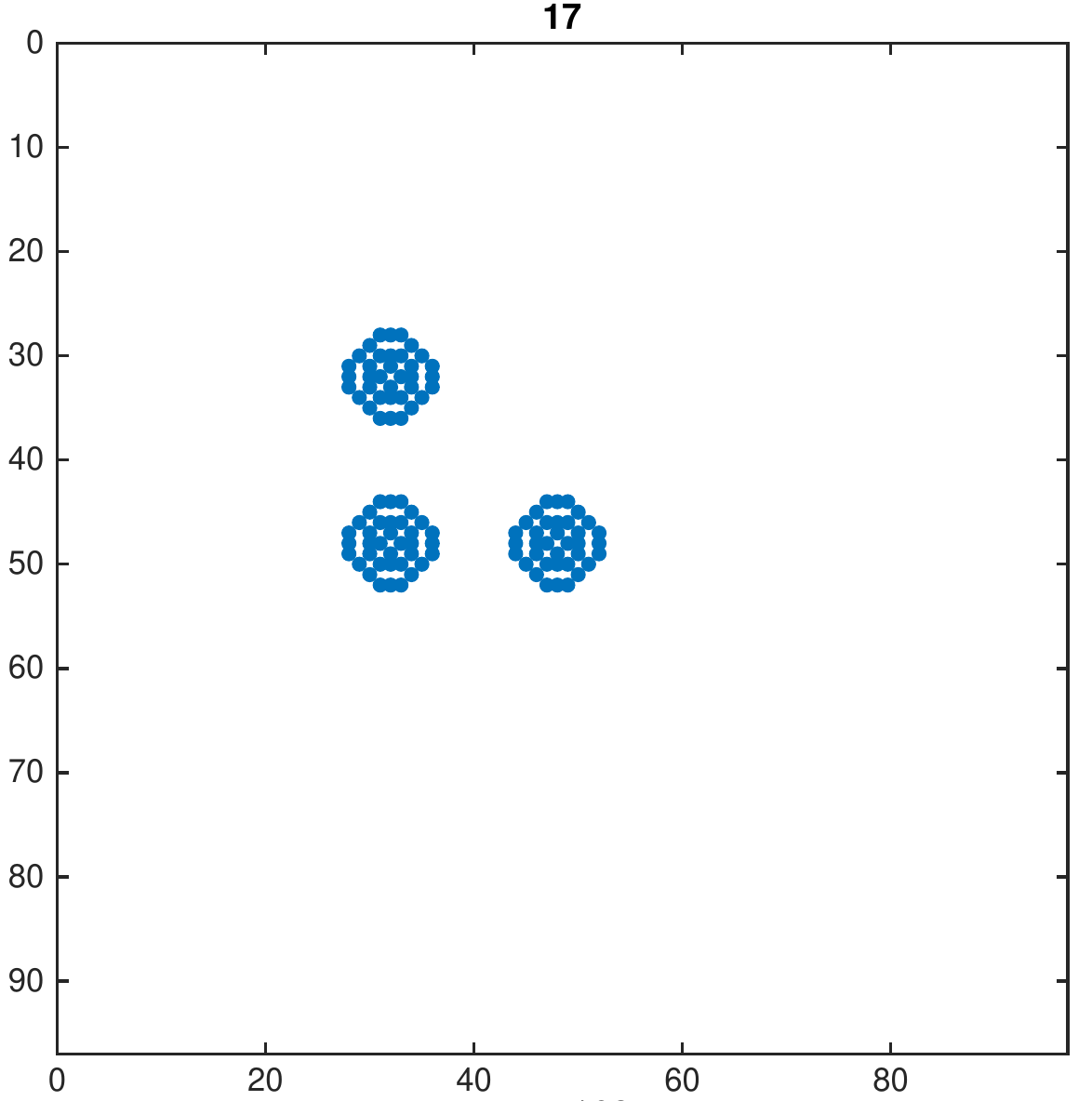} \quad
\includegraphics[scale=0.3]{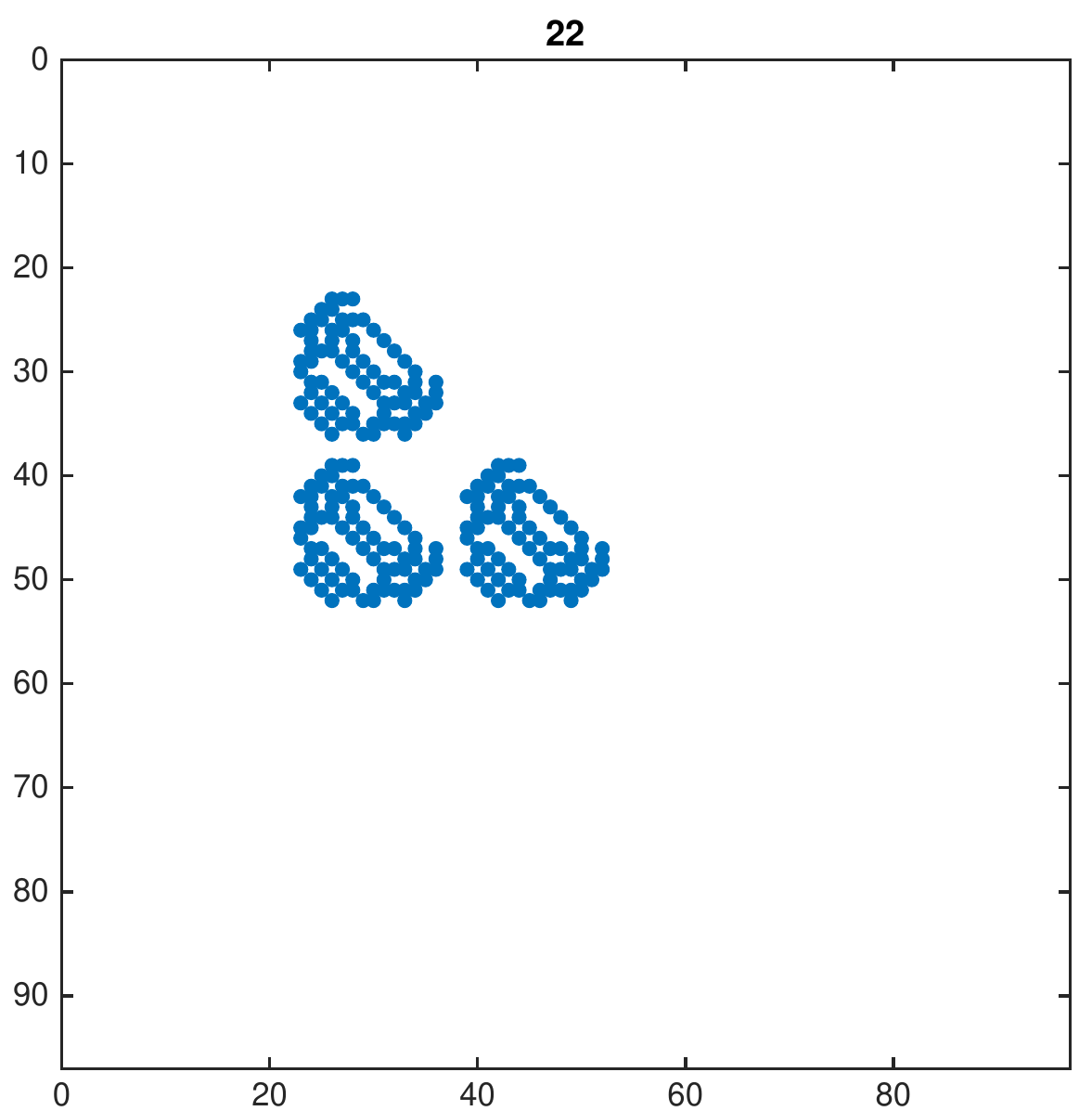} \\
\includegraphics[scale=0.3]{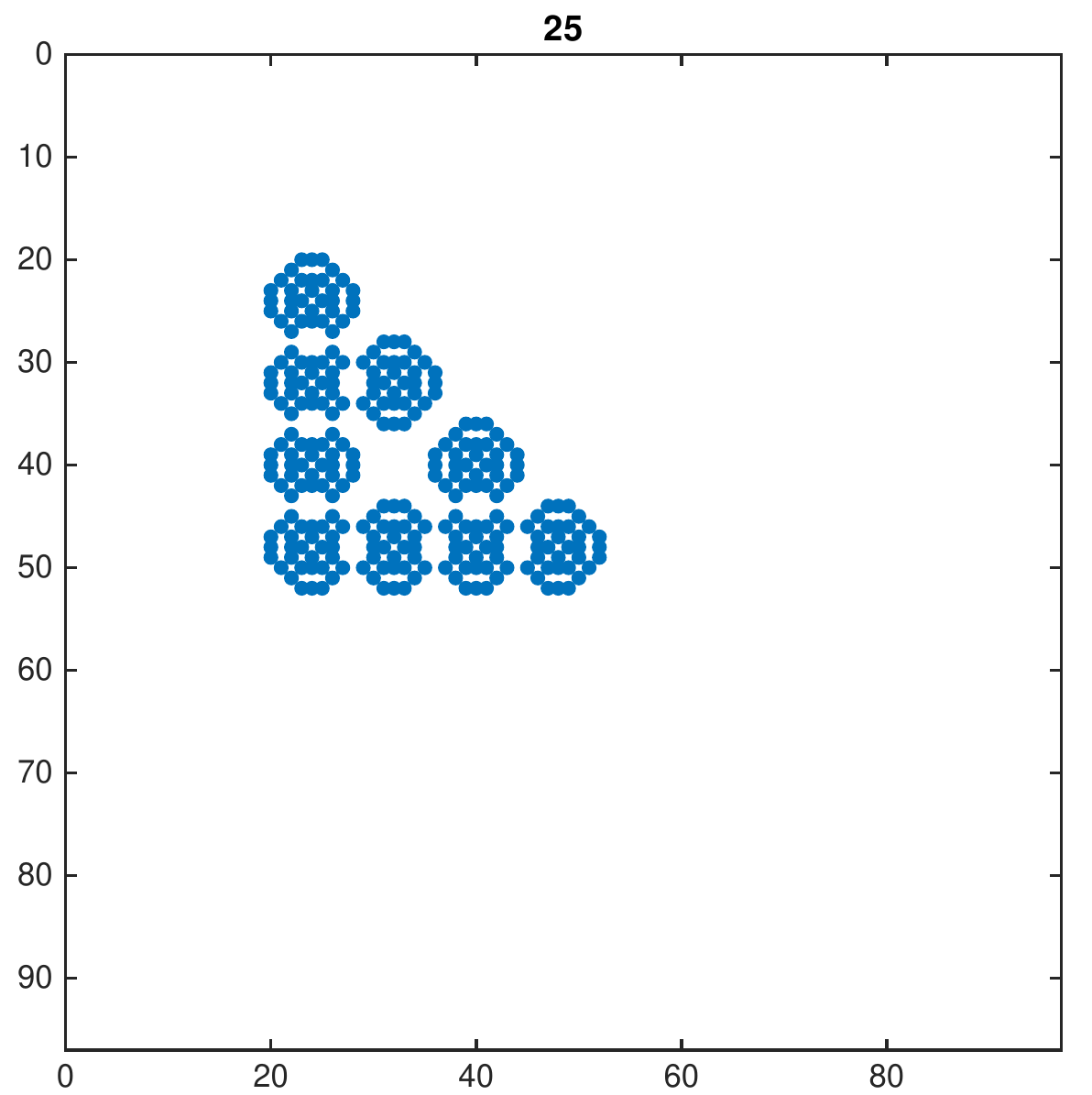} \quad
\includegraphics[scale=0.3]{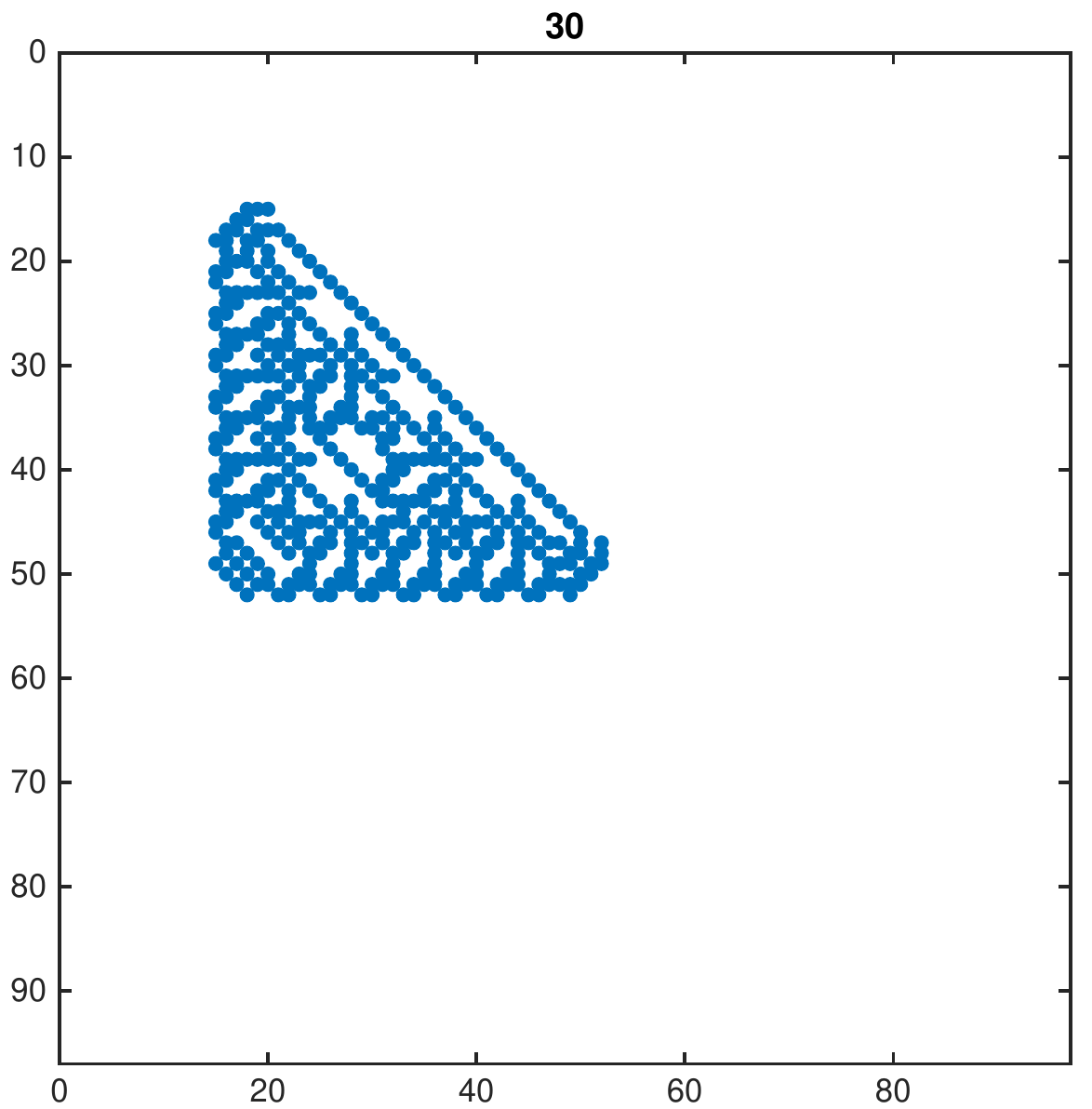} \quad
\includegraphics[scale=0.3]{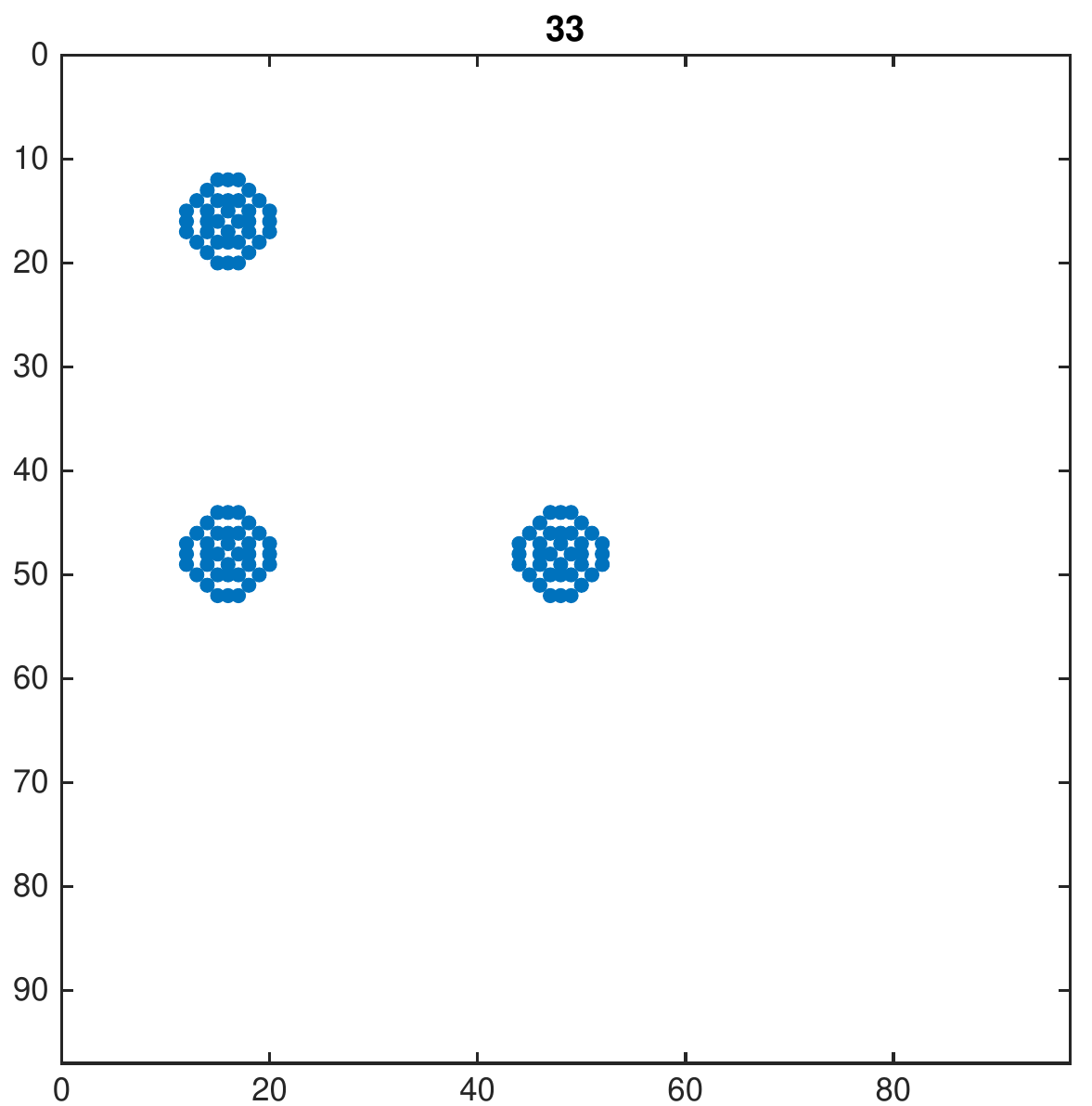}
\end{center}
\caption{A self-replicating pattern generation process for a linear cellular automaton addressed in Example \ref{eg:2d-self-replication}. It is seen that three copies of the initial pattern are reproduced at the 16th and 32nd iterations (the middle one and the right bottom corner). Some interesting transient patterns are also observed.}
\label{fig:2d-self-replication}
\end{figure}
\end{example}

\section{Mixing Property for Local Rules on Convex Hull}

In the previous section, Theorem \ref{thm:rectangle-k-mixing} and Remark \ref{rmk:rectangle-k-mixing-permutive} address that a corner permutive cellular automaton with local rule defined on a hypercuboid is $k$-mixing with respect to the uniform Bernoulli measure for $k \geq 1$. This section investigates a discrimination for determining whether or not a cellular automaton with local rule defined on a convex hull is mixing with respect to the uniform Bernoulli measure.

Suppose $F$ is a CA with the local rule $f$ defined on a $d$-dimensional convex hull $\mathcal{C}$ that is generated by the vertex set $C=\{\mathbf{v}_1, \mathbf{v}_2, \ldots, \mathbf{v}_{\ell}\}$, where $\mathbf{v}_i = (v_{i,1}, v_{i,2}, \ldots, v_{i,d})$ for $1 \leq i \leq \ell$. The mixing property of a CA is examined by the following algorithm.

\begin{algorithm}[Mixing Algorithm]
\mbox{}
\begin{enumerate}[\bf {MA}1.]
\item There exists $\mathbf{v}_n \in C$ such that $v_{n,j} > M_j(C \setminus \{\mathbf{v}_n\})$ for some $j$ and $v_{n,j} > 0$.
\item There exists $\mathbf{v}_n \in C$ such that $v_{n,j} < m_j(C \setminus \{\mathbf{v}_n\})$ for some $j$ and $v_{n,j} < 0$.
\item Suppose neither \textbf{MA1} nor \textbf{MA2} holds, and $\mathbf{v}_n \in C$ is the vertex such that $v_{n,j} = M_j(C \setminus \{\mathbf{v}_n\})$ and $v_{n,j} > 0$ or $v_{n,j} = m_j(C \setminus \{\mathbf{v}_n\})$ and $v_{n,j} < 0$ for some $j$. Let $C' = \{ \mathbf{v} \in C: \pi_j(\mathbf{v}) = v_{n, j}\}$ and let $C^1 \subset \mathbb{Z}^{d-1}$ be the collection of vertices obtained by removing the $j$th coordinate of elements in $C'$. Apply the Mixing Algorithm to $C^1$.
\end{enumerate}
\end{algorithm}

A set $C \in \mathbb{Z}^d$ is said to \emph{satisfy the Mixing Algorithm} (at $\mathbf{v}_n$) if (1) $C$ satisfies either \textbf{MA1} or \textbf{MA2}, (2) $C^1$ that is constructed in \textbf{MA3} satisfies either \textbf{MA1} or \textbf{MA2}, or (3) repeating the procedure in \textbf{MA3} to construct $C^2, C^3, \ldots$ so that $C^j$ satisfies either \textbf{MA1} or \textbf{MA2} for some $j$.

Theorem \ref{thm:polygon-k-mixing} extends Theorem \ref{thm:rectangle-k-mixing} and Remark \ref{rmk:rectangle-k-mixing-permutive} to the case that the local rule of a CA is defined on a multidimensional convex hull.

\begin{theorem}\label{thm:polygon-k-mixing}
Suppose $F$ is a CA with the local rule $f$ defined on $\mathcal{C}$, where $\mathcal{C}$ is a $d$-dimensional convex hull generated by the vertex set $C=\{\mathbf{v}_1, \mathbf{v}_2, \ldots, \mathbf{v}_{\ell}\}$. If $f$ is permutive at $\mathbf{v}_n \in C$ and $C$ satisfies the Mixing Algorithm at $\mathbf{v}_n$, then $F$ is $k$-mixing with respect to the uniform Bernoulli measure for all $k \geq 1$.
\end{theorem}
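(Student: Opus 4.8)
The plan is to mirror the architecture of the proof of Theorem~\ref{thm:rectangle-k-mixing}. First I would reduce the assertion to the sharper, ``exact'' factorization statement: for any cylinders $\mathbf{C}_0,\ldots,\mathbf{C}_k\subset\mathbf{X}$ there is $n_0\in\mathbb{N}$ such that, whenever $n_1,\ldots,n_k\ge n_0$ and $N_i=n_1+\cdots+n_i$,
$$
\mu\big(\mathbf{C}_0\cap F^{-N_1}\mathbf{C}_1\cap\cdots\cap F^{-N_k}\mathbf{C}_k\big)=\mu(\mathbf{C}_0)\mu(\mathbf{C}_1)\cdots\mu(\mathbf{C}_k).
$$
The approximation of arbitrary measurable sets by cylinders, and the passage from the one-step case to general $k$, are the purely formal parts of Steps~1 and~5 of that proof and are insensitive to the shape of $\mathcal{C}$; so the whole burden rests on one lemma: \emph{for every $n\ge1$, $F^n$ is a cellular automaton whose local rule $f^n$ is permutive in the variable $x_{n\mathbf{v}_n}$, and there is a coordinate $j$ with $\pi_j(n\mathbf{v}_n)\to+\infty$ (respectively $-\infty$) as $n\to\infty$}. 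Granting this, $n\mathbf{v}_n$ takes over the role played by the box corner $(nr,nt)$ in Theorem~\ref{thm:rectangle-k-mixing}.

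The next step is to read off what the Mixing Algorithm really says. Following $\mathbf{v}_n$ through the recursion \textbf{MA1}--\textbf{MA3} produces a list of distinct coordinates $j_1,\ldots,j_s\in\{1,\ldots,d\}$, each carrying a sign, such that $\mathbf{v}_n$ is the \emph{unique} vertex of $C$ obtained by first maximizing (or minimizing, according to the sign) the $j_1$st coordinate over $C$, then the $j_2$nd coordinate over the resulting face, and so on, the last selection being \emph{strict} because the final set in the recursion satisfies \textbf{MA1} or \textbf{MA2}. The sign conditions built into the algorithm give $\pi_{j_1}(\mathbf{v}_n)>0$ in the max case and $<0$ in the min case, which is the escaping coordinate $j=j_1$ of the lemma. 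From this description I would deduce the \emph{unique representation property}: if $\mathbf{w}_1,\ldots,\mathbf{w}_n\in\mathcal{C}\cap\mathbb{Z}^d$ and $\mathbf{w}_1+\cdots+\mathbf{w}_n=n\mathbf{v}_n$, then $\mathbf{w}_1=\cdots=\mathbf{w}_n=\mathbf{v}_n$. Indeed, summing $j_1$st coordinates and using that $\pi_{j_1}(\mathbf{v}_n)$ is the maximum of $\pi_{j_1}$ over $\mathcal{C}$ (it is the maximum over the generating set $C$) forces every $\mathbf{w}_\ell$ onto the face $\mathcal{C}\cap\{\pi_{j_1}=\pi_{j_1}(\mathbf{v}_n)\}$; projecting out the $j_1$st coordinate and iterating with $j_2,j_3,\ldots$ confines the $\mathbf{w}_\ell$ to a successively smaller face, and at the strict step the face is the single point $\mathbf{v}_n$.

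To prove the lemma itself I would induct on $n$, the case $n=1$ being the hypothesis. Writing $\mathcal{C}^{(m)}$ for the $m$-fold sumset of $\mathcal{C}\cap\mathbb{Z}^d$, one has $F^n(x)_{\mathbf{0}}=f\big((y_{\mathbf{v}})_{\mathbf{v}\in\mathcal{C}\cap\mathbb{Z}^d}\big)$ with $y_{\mathbf{v}}=f^{n-1}\big(x_{\mathbf{v}+\mathcal{C}^{(n-1)}}\big)$. Fix all coordinates of $x$ on $\mathcal{C}^{(n)}\setminus\{n\mathbf{v}_n\}$ and let $x_{n\mathbf{v}_n}$ vary: a block $y_{\mathbf{v}}$ can respond only if $n\mathbf{v}_n-\mathbf{v}\in\mathcal{C}^{(n-1)}$, i.e.\ only if $n\mathbf{v}_n$ is a sum of the $n$ lattice points $\mathbf{v},\mathbf{w}_1,\ldots,\mathbf{w}_{n-1}$ of $\mathcal{C}$, which by the unique representation property forces $\mathbf{v}=\mathbf{v}_n$. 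So only $y_{\mathbf{v}_n}$ responds, and since $n\mathbf{v}_n=\mathbf{v}_n+(n-1)\mathbf{v}_n$ is the distinguished variable of $f^{n-1}$ by the inductive hypothesis, $y_{\mathbf{v}_n}$ is a bijective function of $x_{n\mathbf{v}_n}$; as $f$ is permutive at $\mathbf{v}_n$, $F^n(x)_{\mathbf{0}}=f(\ldots,y_{\mathbf{v}_n},\ldots)$ is a bijective function of $y_{\mathbf{v}_n}$ with the other blocks frozen. Composing the two bijections proves $f^n$ permutive at $x_{n\mathbf{v}_n}$. This argument is combinatorial and never invokes linearity, so it simultaneously supplies the proof indicated in Remark~\ref{rmk:rectangle-k-mixing-permutive}.

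Finally I would run the measure computation as in Steps~3--5 of the proof of Theorem~\ref{thm:rectangle-k-mixing}, with $n\mathbf{v}_n$ in place of $(nr,nt)$ and $\mathcal{C}^{(n)}$ in place of the box $n\mathcal{C}$: for a single-site cylinder $\langle(\mathbf{w},a)\rangle$, permutivity of $f^n$ at $x_{n\mathbf{v}_n}$ makes $F^{-n}\langle(\mathbf{w},a)\rangle$ a set in which every coordinate on $\mathbf{w}+\mathcal{C}^{(n)}$ is free except the ``special'' site $\mathbf{w}+n\mathbf{v}_n$, which the others determine; since $\pi_{j_1}(n\mathbf{v}_n)\to\pm\infty$, for $n_1,\ldots,n_k$ large all the special sites occurring in $\bigcap_{i=0}^{k}F^{-N_i}\mathbf{C}_i$ lie beyond the bounded supports of $\mathbf{C}_0,\ldots,\mathbf{C}_k$, and a site-by-site evaluation of the measure -- assign the free coordinates subject to the prescriptions of $\mathbf{C}_0$, then let each constraint fix its special site, contributing a factor $1/m$ -- collapses to $\mu(\mathbf{C}_0)\cdots\mu(\mathbf{C}_k)$; taking $\mathbf{C}_0=\mathbf{X}$ and $k=1$ along the way (together with surjectivity of $F$, by the argument of Proposition~\ref{prop:corner-permutive-onto} adapted to a vertex of a convex hull) shows $\mu$ is $F$-invariant, so that $k$-mixing is meaningful. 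I expect two points to need genuine care: extracting the lexicographic/unique-representation statement cleanly from the informally stated recursion \textbf{MA1}--\textbf{MA3} and checking that its base case really is the strict case; and the one new bookkeeping issue relative to Theorem~\ref{thm:rectangle-k-mixing}, namely that for a general polytope $\mathcal{C}$, unlike a box, the special site of one preimage constraint can fall inside the neighborhood $\mathbf{w}'+\mathcal{C}^{(n)}$ of another, so that the constraints must be discharged in a suitable order -- by cylinder index, then lexicographically in the coordinates $j_1,\ldots,j_s$ of their base points -- rather than independently; once this order is fixed the triangular structure makes the count go through.
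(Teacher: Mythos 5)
Your proposal is correct, and although it follows the same overall architecture as the paper's argument (reduce to cylinders, iterate the local rule, track the distinguished site, discharge the preimage constraints one at a time), its technical core is organized quite differently --- and more robustly. The paper establishes permutivity of $f^n$ at the iterated corner only for linear rules, via the Laurent-polynomial representation $\mathbb{F}^n=(\chi(f))^n$ in Step 2 of the proof of Theorem \ref{thm:rectangle-k-mixing}, and then treats the convex-hull case by restricting to $d=2$, enumerating cases \textbf{(I.a)}--\textbf{(II.f)} for the position of $\mathbf{v}_n$, and either invoking a ``proper order'' for the preimages or embedding $\mathcal{C}$ into a rectangle; the nonlinear situation, which the theorem and the examples following it require, is covered only by the unproved Remark \ref{rmk:rectangle-k-mixing-permutive}. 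Your combinatorial induction --- only the block $y_{\mathbf{v}_n}$ can respond to $x_{n\mathbf{v}_n}$ because an extreme point of $\mathcal{C}$ admits a unique representation $n\mathbf{v}_n=\mathbf{w}_1+\cdots+\mathbf{w}_n$ with $\mathbf{w}_i\in\mathcal{C}\cap\mathbb{Z}^d$ --- closes exactly this gap, works verbatim for nonlinear rules and in every dimension, and makes the paper's informal ordering precise: the signed lexicographic preorder in the coordinates $j_1,\ldots,j_s$ extracted from \textbf{MA1}--\textbf{MA3} is translation-invariant and uniquely maximized on the sumset $\mathcal{C}^{(n)}$ at $n\mathbf{v}_n$, hence a special site $\mathbf{w}'+n\mathbf{v}_n$ can fall inside $\mathbf{w}+\mathcal{C}^{(n)}$ only if $\mathbf{w}'$ strictly precedes $\mathbf{w}$, which triangularizes the system of constraints, while the sign condition on $j_1$ is what drives the special sites out of every bounded window. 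What the paper's route buys is brevity and a direct reduction of several cases to Theorem \ref{thm:rectangle-k-mixing}; what yours buys is a single dimension-uniform lemma that simultaneously proves Remark \ref{rmk:rectangle-k-mixing-permutive} and isolates precisely which feature of the Mixing Algorithm is used where (extremality of $\mathbf{v}_n$ for permutivity of the iterates, the sign condition for escape).
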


Before demonstrating the theorem, the following examples clarify the essential concepts of Theorem \ref{thm:polygon-k-mixing}.

\begin{example}\label{eg:2d-mixing-polygon-case1}
Suppose $\mathcal{A}=\{0,1,2,3\}$. Let $\mathcal{C}$ be the polygon in $2$-dimensional lattice generated by the set $C = \{\mathbf{v}_1, \mathbf{v}_2, \mathbf{v}_3, \mathbf{v}_4, \mathbf{v}_5\}$, where
$$
\mathbf{v}_1 = (-1,-1), \mathbf{v}_2 = (-1,1), \mathbf{v}_3 = (0,2), \mathbf{v}_4 = (1,1), \mathbf{v}_5 = (1,-1).
$$

Suppose the local rule $f: \mathcal{A}^{\mathcal{C}} \to \mathcal{A}$ is given by
\begin{align*}
f (x_{\mathcal{C}}) = 2(x_{-1,-1}+x_{-1,1}+x_{1,1}+x_{1,-1})+ 3 x_{0,2} \pmod 4.
\end{align*}
It is seen that $f$ is permutive in the variable $x_{0,2}$ since $f$ is linear and $\mathrm{gcd}(3, 4) = 1$. Moreover, $f$ satisfies \textbf{MA1} follows from $v_{3, 2}>v_{i, 2}$ for $i \in \{1,2,4,5\}$ and $v_{3, 2} > 0$. Theorem \ref{thm:polygon-k-mixing} indicates that the CA $F$ with the local rule $f$ is $k$-mixing with respect to the uniform Bernoulli measure for $k \in \mathbb{N}$.
\end{example}

\begin{example}\label{eg:2d-mixing-polygon-case2}
Suppose $\mathcal{A}, \mathcal{C}$, and $C$ are the same as considered in Example \ref{eg:2d-mixing-polygon-case1}. Let the local rule $f: \mathcal{A}^{\mathcal{C}} \to \mathcal{A}$ is given by
\begin{align*}
f (x_{\mathcal{C}}) = 2(x_{-1,-1} \cdot x_{0,2} + x_{1,1} + x_{1,-1}) + x_{-1,1} \pmod 4.
\end{align*}
Although $f$ is nonlinear, a straightforward verification derives that $f$ is permutive at $\mathbf{v}_2 = (-1, 1)$.

Notably, $f$ does not satisfy neither one of \textbf{MA1}, \textbf{MA2} in the Mixing Algorithm since $v_{2, 1} = v_{1, 1} < m_1(C \setminus \{\mathbf{v}_1, \mathbf{v}_2\})$. \textbf{MA3} suggests that $C^1 = \{v_1^1, v_2^1\}$ should be testified via the Mixing Algorithm to see that if the nonlinear CA $F$ with the local rule $f$ is mixing, where $v_1^1 = v_{1, 2} = -1$ and $v_2^1 = v_{2, 2} = 1$. It comes from $C^1$ satisfies \textbf{MA1} and Theorem \ref{thm:polygon-k-mixing} that $F$ is $k$-mixing with respect to the uniform Bernoulli measure for all $k \geq 1$.
\end{example}

Examples \ref{eg:2d-mixing-polygon-case1} and \ref{eg:2d-mixing-polygon-case2} are not difficult to verify the conditions requested in the Mixing Algorithm due to they are two-dimensional CAs.

\begin{example}\label{eg:3d-mixing-polygon}
Suppose $\mathcal{A}=\{0,1,2,3\}$. Let $\mathcal{C}$ be the convex hull in $3$-dimensional lattice generated by the set $C = \{\mathbf{v}_i\}_{i=1}^{12}$, where
\begin{align*}
\mathbf{v}_1 &= (0,2,-1), & \mathbf{v}_2 &= (-1,0,-1), & \mathbf{v}_3 &= (0,-2,-1), \\
\mathbf{v}_4 &=(2,-2,-1), & \mathbf{v}_5 &= (3,0,-1), & \mathbf{v}_6 &= (2,2,-1), \\
\mathbf{v}_7 &= (0,2,1), & \mathbf{v}_8 &=(-1,0,1), & \mathbf{v}_9 &= (0,-2,1), \\
\mathbf{v}_{10} &= (2,-2,1), & \mathbf{v}_{11} &= (3,0,1), & \mathbf{v}_{12} &= (2,2,1).
\end{align*}

Suppose the local rule $f: \mathcal{A}^{\mathcal{C}} \to \mathcal{A}$ is given by
\begin{align*}
f (x_{\mathcal{C}}) = &x_{\mathbf{v}_1} x_{\mathbf{v}_2} + 2(x_{\mathbf{v}_4}+x_{\mathbf{v}_5} + x_{\mathbf{v}_7}^2 \cdot x_{\mathbf{v}_8}) + 3 x_{\mathbf{v}_6} + \\
 &x_{\mathbf{v}_3} x_{\mathbf{v}_9} x_{\mathbf{v}_{10}} + 2 x_{\mathbf{v}_{11}}^2 x_{\mathbf{v}_{12}} \pmod 4.
\end{align*}
A careful elucidation deduces that $f$ is permutive in the variable $x_{2,2,-1}=x_{\mathbf{v}_6}$. Since $v_{i, 3} = m_3(C) = -1 < 0$ for $1 \leq i \leq 6$, $f$ does not belong to the first two criteria of the Mixing Algorithm. Let $C^1 = \{\mathbf{v}_i^1\}_{i=1}^6$, as indicated in \textbf{MA1}, where
\begin{align*}
\mathbf{v}_1^1 &= (0,2), &\mathbf{v}_2^1 &= (-1,0), &\mathbf{v}_3^1 &= (0,-2), \\
\mathbf{v}_4^1 &= (2,-2), &\mathbf{v}_5^1 &= (3,0), &\mathbf{v}_6^1 &= (2,2).
\end{align*}
It remains to verify whether or not $C^1$ satisfies the Mixing Algorithm.

Since $v_{6, 2}^1 = M_2(C^1) > 0$ and $v_{1, 2}^1 = v^1_{6, 2}$, $C^1$ does not satisfy \textbf{MA1} and \textbf{MA2}. It follows that $C^2 = \{\mathbf{v}^2_1, \mathbf{v}^2_6\}$ is constructed, where $\mathbf{v}^2_1 = 0, \mathbf{v}^2_6 = 2$. The fact that $C^2$ satisfies \textbf{MA1} demonstrates the $k$-mixing property of $F$ with respect to the uniform Bernoulli measure for all $k \in \mathbb{N}$ via Theorem \ref{thm:polygon-k-mixing}.

This completes the illustration of Example \ref{eg:3d-mixing-polygon}.
\end{example}

Theorem \ref{thm:polygon-k-mixing} is demonstrated via an analogous argument to the proof of Theorem \ref{thm:rectangle-k-mixing} with a little modification. To make the present paper more compact, the following demonstration addresses the key idea rather than the detailed discussion.

\begin{proof}[Proof of Theorem \ref{thm:polygon-k-mixing}]
To sketch the key idea of the proof of Theorem \ref{thm:polygon-k-mixing}, it suffices to concentrate on the two-dimensional case. The investigation of the $k$-mixing property of multidimensional CAs can be completed via similar deliberation iteratively.

Theorem \ref{thm:polygon-k-mixing} and the Mixing Algorithm for the case $d = 2$ is presented as follows. Let $f: \mathcal{A}^{\mathcal{C}} \to \mathcal{A}$ be a local rule defined on a polygon $\mathcal{C}$ generated by $C = \{\mathbf{v}_i = (v_{i, 1}, v_{i, 2})\}_{i=1}^{\ell} \subset \mathbb{Z}^2$. Suppose $f$ is permutive at $\mathbf{v}_n = (v_{n, 1}, v_{n, 2})$ and $\mathbf{v}_n$ satisfies one of the following conditions:
\begin{enumerate}[\bf (I)]
\item $v_{n,j} > v_{i, j}$ (respectively $v_{n, j} < v_{i, j}$) for all $i \neq n$ and $v_{n, j}>0$ (respectively $v_{n, j}<0$) for some $j \in \{1, 2\}$.
\item There exists $j \in \{1, 2\}$ such that $v_{n, j} = M_j(C \setminus \{\mathbf{v}_n\})$ (respectively $v_{n, j} = m_j(C \setminus \{\mathbf{v}_n\})$) and $v_{n, j}>0$ (respectively $v_{n, j}<0$). Let
$$
C' = \{\mathbf{u} = (u_1, u_2) \in C: u_j = \mathbf{v}_{n, j}\}
$$
and let $\overline{j}= 3 - j$. Then $v_{n, \overline{j}} > M_{\overline{j}} (C' \setminus \{\mathbf{v}_n\})$ (respectively $v_{n, \overline{j}} < m_{\overline{j}} (C' \setminus \{\mathbf{v}_n\})$) and $v_{n, \overline{j}} > 0$ (respectively $v_{n, \overline{j}} < 0$).
\end{enumerate}
Then a CA $F$ with the local rule $f$ is $k$-mixing with respect to the uniform Bernoulli measure for all positive integer $k$.

Suppose $\mathbf{v}_n$ satisfies Condition (\textbf{I}). Then the coordinates of $\mathbf{v}_n$ satisfy one of the following conditions specifically.
\begin{enumerate}[\bf ({I.}a)]
\item $v_{n, 1} > v_{i, 1}$ for $i \neq n$, and $v_{i, 2} < v_{n, 2} < v_{j, 2}$ for some $i, j \neq n$.
\item $v_{n, 1} < v_{i, 1}$ for $i \neq n$, and $v_{i, 2} < v_{n, 2} < v_{j, 2}$ for some $i, j \neq n$.
\item $v_{n, 2} > v_{i, 2}$ for $i \neq n$, and $v_{i, 1} < v_{n, 1} < v_{j, 1}$ for some $i, j \neq n$.
\item $v_{n, 2} < v_{i, 2}$ for $i \neq n$, and $v_{i, 1} < v_{n, 1} < v_{j, 1}$ for some $i, j \neq n$.
\item $v_{n, 1} > v_{i, 1}$ for $i \neq n$, and $v_{n, 2} \geq M_2(C)$ or $v_{n, 2} \leq M_2(C)$.
\item $v_{n, 1} < v_{i, 1}$ for $i \neq n$, and $v_{n, 2} \geq M_2(C)$ or $v_{n, 2} \leq M_2(C)$.
\item $v_{n, 2} > v_{i, 2}$ for $i \neq n$, and $v_{n, 1} \geq M_1(C)$ or $v_{n, 1} \leq M_1(C)$.
\item $v_{n, 2} < v_{i, 2}$ for $i \neq n$, and $v_{n, 1} \geq M_1(C)$ or $v_{n, 1} \leq M_1(C)$.
\end{enumerate}
The demonstration of $\mathbf{v}_n$ satisfying \textbf{(I.a)} is addressed. The other cases can be done similarly. Given two cylinders $\mathbf{C}_0 = <(\mathbf{v}_1^0, c_1^0), \ldots, (\mathbf{v}_{l_0}^0, c_{l_0}^0)>$ and $\mathbf{C}_1 = <(\mathbf{v}_1^1, c_1^1), \ldots, (\mathbf{v}_{l_1}^1, c_{l_1}^1)>$ in $\mathbf{X}$, similar discussion to the proof of Theorem \ref{thm:rectangle-k-mixing} would show that $\mu(\mathbf{C}_0 \cap F^{-n} \mathbf{C}_1) = \mu(\mathbf{C}_0) \mu(\mathbf{C}_1)$ for $n$ large enough. It is worth emphasizing that, to choose $n$ properly, the following specific procedure during the evaluation of $\mu(\mathbf{C}_0 \cap F^{-n} \mathbf{C}_1)$ is essential: For $q_1, q_2 \in \{1, 2, \ldots, l_1\}$, the $n$th preimage $F^{-n}(<\mathbf{v}^1_{q_1}, c^1_{q_1}>)$ of the cylinder $<\mathbf{v}^1_{q_1}, c^1_{q_1}>$ has to be considered before $F^{-n}(<\mathbf{v}^1_{q_2}, c^1_{q_2}>)$ if and only if
\begin{enumerate}[\bf 1)]
  \item $v^1_{q_1, 1} < v^1_{q_2, 1}$;
  \item $v^1_{q_1, 1} = v^1_{q_2, 1}$ and $v^1_{q_1, 2} < v^1_{q_2, 2}$.
\end{enumerate}
With the notion of proper order for the computation of $F^{-n} \mathbf{C}_1)$, an analogous investigation to the proof of Theorem \ref{thm:rectangle-k-mixing} reaches the desired result, i.e., $F$ is strongly mixing with respect to the uniform Bernoulli measure $\mu$. Moreover, it can be verified that $F$ is $k$-mixing with respect to the uniform Bernoulli measure for $k \geq 1$.

Suppose $\mathbf{v}_n$ satisfies Condition (\textbf{II}). It is seen without difficulty that the coordinates of $\mathbf{v}_n$ can be described as the following cases.
\begin{enumerate}[\bf ({II.}a)]
\item $v_{n, 1} = M_1(C)$ and $v_{n, 2} = M_2(C)$.
\item $v_{n, 1} = M_1(C)$ and $v_{n, 2} = m_2(C)$.
\item $v_{n, 1} = m_1(C)$ and $v_{n, 2} = M_2(C)$.
\item $v_{n, 1} = m_1(C)$ and $v_{n, 2} = m_2(C)$.
\item $v_{n, 1} \in \{M_1(C), m_1(C)\}$ but $v_{n, 2} \notin \in \{M_2(C), m_2(C)\}$.
\item $v_{n, 2} \in \{M_2(C), m_2(C)\}$ but $v_{n, 1} \notin \in \{M_1(C), m_1(C)\}$.
\end{enumerate}
Cases \textbf{(II.e)} and \textbf{(II.f)} can be verified via the discussion above, it remains to study the other four cases. Notably, for cases \textbf{(II.a)} to \textbf{(II.d)}, $\mathcal{C}$ can be embedded into a rectangle $\overline{\mathcal{C}}$ so that there exists a unique $\overline{f}: \mathcal{A}^{\overline{\mathcal{C}}} \to \mathcal{A}$ with $\overline{f}|_{\mathcal{A}^{\mathcal{C}}} = f$ and $\overline{f}|_{\mathcal{A}^{\overline{\mathcal{C}} \setminus \mathcal{C}}} = 0$. More specifically, $\overline{F} = F$ and $\overline{f}$ is corner permutive, where $\overline{F}$ is a CA with the local rule $\overline{f}$. Theorem \ref{thm:rectangle-k-mixing} indicates that $\overline{F}$ is $k$-mixing with respect to the uniform Bernoulli measure for $k \geq 1$, and so is $F$.

This completes the proof.
\end{proof}

\section{Conclusion and Discussion}

This elucidation investigates sufficient conditions for the strongly mixing property of a multidimensional cellular automaton $F$ with the local rule $f$ defined on a bounded region $D \subset \mathbb{Z}^d$. Theorem \ref{thm:rectangle-k-mixing} reveals that, when $D$ is a $d$-dimensional hypercuboid and $f$ is corner permutive, $F$ is $k$-mixing with respect to the uniform Bernoulli measure for $k \geq 1$.

Observe that a hypercuboid $D$ is a convex hull generated by its apexes. More precisely, there exists $\{k_i, K_i\}_{i=1}^d \subset \mathbb{Z}$ and
$$
C = \{\mathbf{v} = (v_1, \ldots, v_d): v_i \in \{k_i, K_i\} \text{ fo all } i\}
$$
such that
$$
D = \mathrm{poly}(C) = \{\Sigma a_j \mathbf{v}_j: 0 \leq a_j \leq 1, \Sigma a_j = 1, \mathbf{v}_j \in C\} \cap \mathbb{Z}^d.
$$
The assumption that $f$ is corner permutive is $f$ is a permutation at $\mathbf{v}$ for some $\mathbf{v} \in C$.

Theorem \ref{thm:polygon-k-mixing} is an extension of above observation, which addresses that, if $D$ is a multidimensional convex hull generated by a ``minimal" vertex set $C$ and $f$ is a permutation at $\mathbf{v}$ for some $\mathbf{v} \in C$, then $F$ is $k$-mixing with respect to the uniform Bernoulli measure for all positive integer $k$. Herein a generating set $C$ is called \emph{minimal} if $C \subseteq C'$ for all $C'$ such that $\mathrm{poly}(C') = \mathrm{poly}(C)$.

It is natural to ask that is there any possibility to weaken the sufficient conditions proposed in Theorems \ref{thm:rectangle-k-mixing} and \ref{thm:polygon-k-mixing}? The upcoming example infers a non-corner-permutive cellular automaton $F$ which is even nonergodic.

\begin{example}\label{eg:not-corner-permutive-not-mixing}
Suppose $\mathcal{A} = \{0, 1, 2, 3\}$ and a local rule $f$ given by
\begin{align*}
f\left(
  \begin{array}{ccc}
  x_{0,2} & x_{1,2} & x_{2,2} \\
  x_{0,1} & x_{1,1} & x_{2,1} \\
  x_{0,0} & x_{1,0} & x_{2,0} \\
  \end{array}
\right)
=2x_{0,1} + x_{1,1} + \left\lfloor\dfrac{2x_{22}}{3}\right\rfloor + 3 x_{2,2} + 2x_{0,0} \cdot x_{2,0}
\end{align*}
is defined on a polygon, where $\lfloor q \rfloor$ refers to the greatest integer which is less than or equal to $q$. A straightforward examination demonstrates that $f$ is permutive in the variable $x_{1,1}$ and is not corner permutive. Furthermore, it is seen that $F^{-1}(\{<(0,0),i>, <(0,1),2>\}) = \varnothing$ for all $i \in \mathcal{A}$, where $F$ is the CA with the local rule $f$. This makes $F$ not ergodic, and thus not $k$-mixing.
\end{example}

\begin{remark}
It is remarkable that the result in the present investigation can be extended to any Markov measure $\nu$ such that the cellular automaton $F$ is $\nu$-invariant. The discussion is similar but more complicated.
\end{remark}

The elucidation of $k$-mixing property of a cellular automaton can be applied to the study of ergodicity of a multidimensional cellular automaton, which is covered in the further paper.

\bibliographystyle{amsplain} 
\bibliography{../../grece}

\end{document}